\newenvironment{proofsketch}[1]
  {%
   \begin{proof}}
  {\end{proof}}
\tikzset{>=stealth, shorten >=1pt}
\tikzset{every edge/.style = {thick, ->, draw}}
\tikzset{every loop/.style = {thick, ->, draw}}
\tikzset{every state/.style = {inner sep = 0}}
\newcommand{\myquot}[1]{``#1''}
\newcommand{\pow}[1]{2^{#1}}
\newcommand{\set}[1]{\{#1\}}
\newcommand{\size}[1]{|#1|}
\newcommand{\nats}{\mathbb{N}}
\newcommand{\ints}{\mathbb{Z}}
\newcommand{\nnreals}{\mathbb{R}_{\ge 0}}
\newcommand{\clocks}{X}
\newcommand{\clockcons}[1]{\mathcal{C}(#1)}
\newcommand{\rectclockcons}[1]{\mathcal{B}(#1)}
\newcommand{\reset}[2]{#1[#2]}
\newcommand{\zeroval}{\mathbf{0}}
\newcommand{\eval}[1]{\llbracket#1\rrbracket}
\newcommand{\aut}{\mathcal{A}}
\newcommand{\inv}{\mathsf{Inv}}
\newcommand{\act}{\mathsf{Act}}
\newcommand{\resetinstr}[1]{#1:=0}
\newcommand{\tsys}[1]{\mathfrak{T}(#1)}
\newcommand{\labtrans}[3]{#1 \xrightarrow{#2} #3}
\newcommand{\last}{\mathsf{Last}}
\newcommand{\game}{\mathcal{G}}
\newcommand{\delaygame}{\mathcal{D}}
\newcommand{\outcomes}[2]{\mathsf{Outcomes}(#1,#2)}
\newcommand{\push}[1]{\mathsf{schd}(#1)}
\newcommand{\Push}{\mathsf{Schd}}
\newcommand{\mach}{\mathcal{M}}
\newcommand{\stopp}{\texttt{STOP}}
\newcommand{\instr}{\texttt{I}}
\newcommand{\inc}[1]{\texttt{INC(X}_{#1}\texttt{)}}
\newcommand{\dec}[1]{\texttt{DEC(X}_{#1}\texttt{)}}
\newcommand{\ite}[3]{\texttt{IF X}_{#1}\texttt{=0 GOTO }#2\texttt{ ELSE GOTO }#3}
\newcommand{\zero}{Z}
\newcommand{\one}{O}
\newcommand{\termsymb}{\#}
\newcommand{\reload}[1]{\mathsf{reschd}(#1)}
\newcommand{\dom}[1]{\mathsf{dom}(#1)}
\newcommand{\ran}[1]{\mathsf{ran}(#1)}
\newcommand{\capa}{\mathit{cap}}
\newcommand{\clockf}{\mathsf{cl}}
\newcommand{\actionf}{\mathsf{ac}}
\newcommand{\delayf}{\mathsf{{dl}}}
\newcommand{\exptime}{\textsc{ExpTime}}
\newcommand{\twoexptime}{\textsc{2ExpTime}}
\newcommand{\hole}{\mathsf{Hole}}
\newcommand{\on}{\mathsf{On}}
\newcommand{\paint}{\mathsf{Paint}}
\newcommand{\success}{\mathsf{Succ}}
\newcommand{\piston}{\mathsf{Piston}}
\newcommand{\off}{\mathsf{Off}}
\newcommand{\err}{\mathsf{Err}}
\newcommand{\kick}{\mathsf{kick}}
\newcommand{\mov}{\mathsf{mov}}
\newcommand{\Psy}{{\cal P}}
\newcommand{\Sch}{{\cal S}}
\begin{document}
\title{You May Delay, but Time Will Not\thanks{\myquot{Poor} Richard Saunders, otherwise known as Benjamin Franklin.}:\\ Timed Games Under Delayed Control\thanks{Partially funded by the Villum Investigator Grant S4OS held by Kim G.\ Larsen and by DIREC - Digital Research Centre Denmark.}}
%
%

\author{Kim G.\ Larsen\inst{1}\orcidID{0000-0002-5953-3384} \and
Martin Zimmermann\inst{1}\orcidID{0000-0002-8038-2453} }
\authorrunning{K.\ G.\ Larsen and M.\ Zimmermann}
%
\institute{Aalborg University, Aalborg, Denmark \\
\email{\{kgl,mzi\}@cs.aau.dk}
}
\maketitle              
\begin{abstract}
Inspired by Martin Fränzle's persistent and influential work on capturing and handling delay inherent to cyber-physical systems in the formal verification of such systems, we study timed games where controllable actions do not take effect immediately, but only after some delay, i.e., they are scheduled for later execution.

We show that solving such games is undecidable if an unbounded number of actions can be pending.
On the other hand, we present a doubly-exponential time algorithm for games with a bound on the number of pending actions, based on a reduction to classical timed games. 
This makes timed games under delayed control with bounded schedules solvable  with existing tools like UPPAAL. 

\keywords{Timed Games \and Delayed Control \and Synthesis.}
\end{abstract}

\section{Introduction}

For more than 30 years, Martin Fränzle has been a main contributor to the foundations of real-time and cyber-physical systems with contributions to modeling, specification, monitoring, shielding, model checking and synthesis.
Some of the earliest work of Martin Fränzle was targeting the expressive and very elegant Duration Calculus~\cite{DBLP:journals/ipl/ChaochenHR91} with contributions to efficient model checking~\cite{DBLP:journals/fac/Franzle04} and even synthesis~\cite{DBLP:conf/ftrtft/Franzle96}.  The latter work was presented by Martin Fränzle at the conference FTRTFT already in 1996 at Uppsala University coinciding with the very first tutorial on UPPAAL~\cite{DBLP:conf/tacas/BengtssonLLPY96} provided by the first author of this paper. In all likelihood, the conference~FTRTFT marked the first encounter of the two. 

The most recent collaboration between Martin Fränzle and (now) both authors of this contribution was concerned with capturing delays in the monitoring of real-time systems.
Such delays are an inherent feature  of all complex cyber-physical systems, but are often ignored due to the complexity they incur.
However, delays may be of crucial importance for the faithful analysis of a given system.
They may arise, for example, in the following situations:

\begin{description}
\item[{\sl Processing and Computation Delays:}] Many control systems use complex algorithms, like object recognition in autonomous vehicles or predictive models in industrial control. These computations take time, creating delays before control commands are issued. To ensure accurate control, data from sensors often needs to be filtered to reduce noise, which introduces additional processing time. Also, digital control systems operate on discrete samples, so a low sampling rate can lead to delays in recognizing system changes and responding to them.
\item[{\sl Actuator Dynamics:}]
 Actuators, like motors and hydraulic systems, have physical limitations in how quickly they can respond. For example, the time it takes for a large motor to accelerate or decelerate causes delays.  Furthermore, systems with large masses or physical resistance, like robotic arms or ship rudders, experience delays because they must overcome inertia before moving.
 \item[{\sl Communication and Transmission Delays:}]
In networked control systems, especially those connected remotely, data transmission introduces delays. For instance, a remote control signal sent to a drone might take time to reach it over the internet.
For long-distance systems, such as satellite controls, there is an inherent delay because of the time it takes for signals to travel.
\item[{\sl Synchronization and Coordination:}]
Some systems require synchronization involving multiple components or operations. For example, in manufacturing, a robotic arm might need to wait for a previous operation to finish before starting the next.
In complex systems, certain tasks are prioritized, which can lead to delays in lower-priority control actions, especially in multi-tasking environments like autonomous vehicles or robots.
\item[{\sl Intentional Delay for Stability and Safety:}]
Control actions may be delayed by design to ensure conditions are safe for operation. For example, chemical plants often have safety verifications before initiating changes.
Furthermore, to prevent oscillations or over-corrections, some systems use intentional delays, such as damping or time lags, to smooth out responses and ensure stability.
\end{description}

All of the above factors can lead to delays between when a control command is issued and when the action actually occurs, impacting the responsiveness of the system and often requiring compensatory design measures.  The importance of delays in control systems was very early on recognized by Martin Fränzle\footnote{In relation to the title of this paper, one may say that Martin Fränzle did \emph{not} delay when it came to dealing with delay! Rather, he acted before many of the rest of us.} and, in a series of groundbreaking results, he made contributions to the theoretical foundations of real-time and cyber-physical systems  which \emph{include delays as a first-class citizen}. We recall in particular:

\begin{description}
    \item[{\sl Delay Differential Equations:}] 
   In several papers, Martin Fränzle and collaborators have studied and applied  Delay Differential Equations (DDEs) as a natural model of networked control systems, where the communication delay in the feedback loop cannot always be ignored.  Iterating bounded degree interval-based Taylor over-approximations of the time-wise segments of the solution to a DDE has been proposed as a method for analysing stability and safety~\cite{DBLP:conf/cav/ZouFZM15}.  Further contributions on DDE's by Martin Fränzle include the verification of temporal logic~\cite{DBLP:conf/ictac/MosaadFX16,DBLP:journals/cuza/MosaadF017}.
  
    \item[{\sl Safety Shielding under Delay:}] Shields are correct-by-construction runtime enforcers that guarantee safe execution by correcting any action that may cause a violation of a formal safety specification. Most recently, Martin Fränzle together with co-authors proposed synthesis algorithms to compute delay-resilient shields that guarantee safety under worst-case assumptions on the delays of the input signals~\cite{DBLP:conf/aips/CordobaPFBK23}. This work also introduced novel heuristics for deciding between multiple corrective actions, designed to minimize the number of future shield interferences caused by delays.

    \item[{\sl Controller Synthesis under Delay:}] 
    Martin Fränzle and his co-authors studied the problem of synthesizing controllers that can effectively operate in the presence of delays during interactions with their environments~\cite{DBLP:journals/acta/ChenFLMZ21}. The authors model this problem using two-player games, where one player represents the controller and the other represents the environment. The paper explores the complexity of finding winning strategies for the controller under various types of delay, including constant delays, out-of-order delivery of messages, and bounded message loss. In particular a novel incremental algorithm for synthesizing delay-resilient controllers is presented that outperforms existing reduction-based approaches~\cite{DBLP:journals/tac/Tripakis04}. 
    
    More recently, Martin Fränzle, Sarah Winter, and the second author of this paper~\cite{FWZ23arxiv,FWZ23} have studied the relation between controller synthesis under delay as discussed above and so-called delay games, another model of controller synthesis under delay due to Hosch and Landweber~\cite{HL72}, thereby allowing to transfer results for delay games~\cite{DBLP:journals/corr/KleinZ14,DBLP:conf/fsttcs/KleinZ16} to the model introduced by Martin Fränzle and his co-authors.

    \item[{\sl Monitoring under Delay:}]  The two authors of this paper -- together with Thomas M. Grosen --  have had the pleasure of a recent collaboration with Martin Fränzle on online monitoring of real-time systems in the setting of delayed observations of actions~\cite{10.1007/978-3-031-76554-4_11}. In fact, a purely zone-based online monitoring algorithm has been presented for MITL specifications, which handles \emph{parametric} delays without recurrence to costly verification procedures for parametric timed automata. The monitoring algorithm has been implemented on top of the real-time model checking tool UPPAAL~\cite{DBLP:journals/sttt/LarsenPY97} with  encouraging initial results. 
    
\end{description}
  
  The joint work on monitoring was initiated during a research visit of Martin Fränzle in Aalborg in early 2023. Here, one of the research questions on the joint wish-list was precisely \emph{delayed control} in a timed game setting, thereby generalizing the work on controller synthesis from the discrete-time setting to the real-time setting.  We are happy to provide an initial study of this question with this Festschrift contribution.
   
\begin{figure}[h]
    \centering
   \includegraphics[width=\linewidth,trim=0pt 175pt 0pt 50pt, clip]{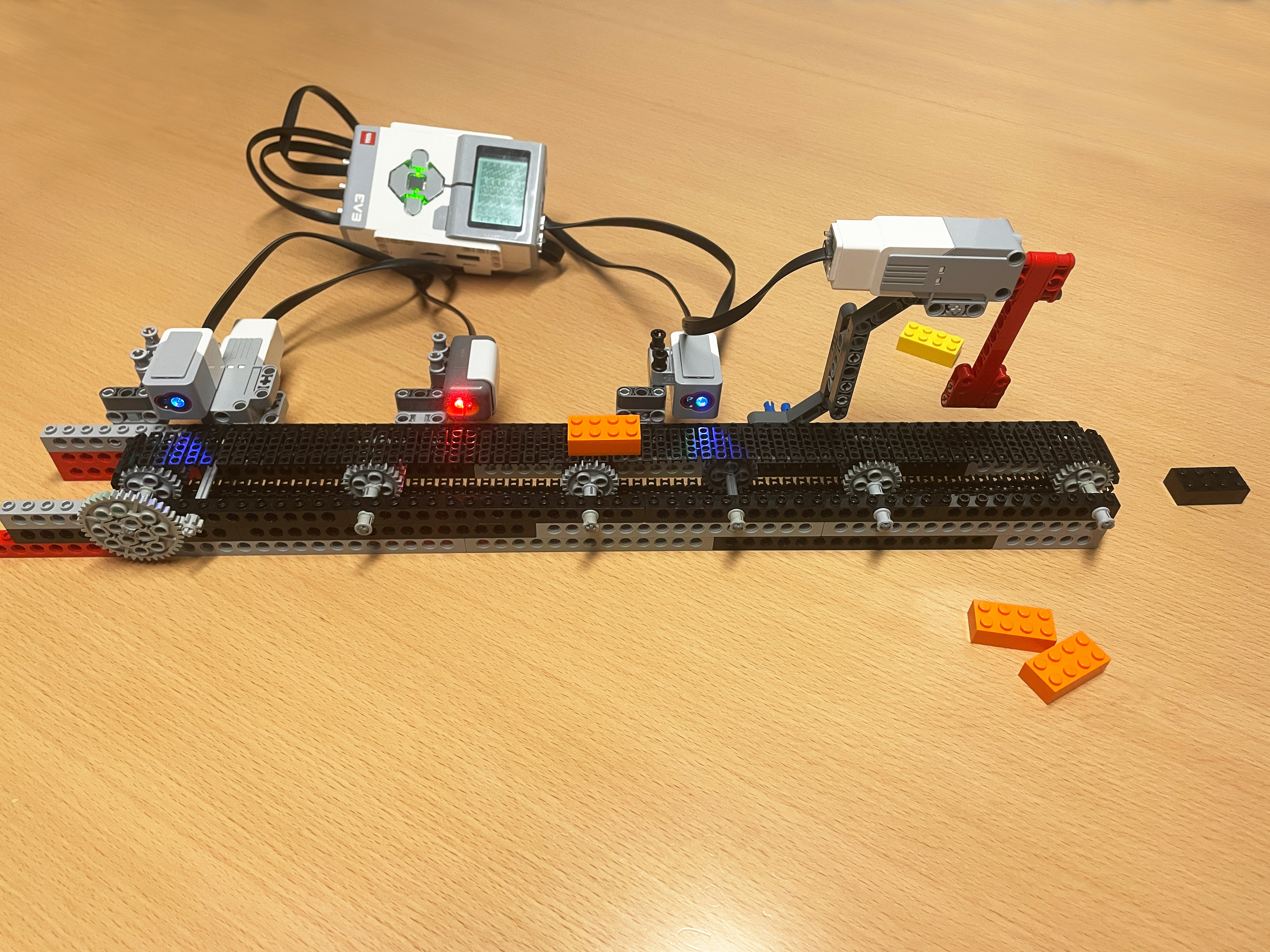}
   \caption{LEGO Mindstorm Production System.}
   \label{fig:mindstorm}
\end{figure}

To provide a first illustration of the concepts, Figure~\ref{fig:mindstorm} shows a LEGO Mindstorm production system involving a box moving on a conveyor belt.
Now the  timed automata~$\Psy$ and $\Sch$ in Figure~\ref{fig:prdsys} model the production system. The various locations of the timed automaton~$\Psy$ indicate the position and potential treatment of the box, e.g. in $\hole$ a hole is being drilled into the box, and in $\paint$ the box is being painted (obviously, not all these treatments are implemented in the LEGO model). In the first four locations the box may at any point in time be subject to the control action~$\kick$ (indicated by full transitions). In the first three locations ($\on$,  $\hole$, $\paint$) this will lead to an error-state ($\err_i$), whereas if $\kick$'ed in the $\piston$-location, the box will correctly end in the $\success$-location.  If the box is not $\kick$'ed it will erroneously end in $\off$. The invariants and guards on the clock~$x$ of the timed automaton~$\Psy$ ensure that the time spent in each  (top-) location is between $8$ and $10$~seconds. The uncontrollability of the exact time, when the box $\mov$'es from one location to the next, is indicated by the use of dashed transitions.

\begin{figure}[t]
    \centering
    \begin{tikzpicture}[thick,xscale=.9]

    \node[state,label=above:{$x \le 10$},inner sep = 0] (on) at (0,0) {$\on$};
    \node[state,label=above:{$x \le 10$},inner sep = 0] (hole) at (2.75,0) {$\hole$};
    \node[state,label=above:{$x \le 10$},inner sep = 0] (paint) at (5.5,0) {$\paint$};
    \node[state,label=above:{$x \le 10$},inner sep = 0] (piston) at (8.25,0) {$\piston$};
    \node[state,inner sep = 0] (end) at (11,0) {$\off$};

    \node[state,inner sep = 0] (err1) at (0,-2.25) {$\err_1$};
    \node[state,inner sep = 0] (err2) at (2.75,-2.25) {$\err_2$};
    \node[state,inner sep = 0] (err3) at (5.5,-2.25) {$\err_3$};
    \node[state,inner sep = 0] (off) at (8.25,-2.25) {$\success$};

    \path
    (-.75,0) edge (on)
    (on) edge[dashed,bend right] node[above] {$\mov$} node[below] {$x \ge 8\quad \resetinstr{x}$} (hole)
    (hole) edge[dashed,bend right] node[above] {$\mov$} node[below] {$x \ge 8\quad \resetinstr{x}$} (paint)
    (paint) edge[dashed,bend right] node[above] {$\mov$} node[below] {$x \ge 8\quad \resetinstr{x}$} (piston)
    (piston) edge[dashed,bend right] node[above] {$\mov$} node[below] {$x \ge 8\quad \resetinstr{x}$} (end)
    (on) edge node[right,near end] {$\kick$} (err1)
    (hole) edge node[right,near end] {$\kick$} (err2)
    (paint) edge node[right,near end] {$\kick$} (err3)
    (piston) edge node[right,near end] {$\kick$} (off)
    ;
    
    \node[state] (s) at (5.5,-4) {}; 

    \path
    (5.5,-3.25) edge (s)
    (s) edge[loop left,dotted] node[left, align = center] {$\push{\kick,13}$\\$y \ge 8$\\ $\resetinstr{y}$} ()
    (s) edge[loop right,dotted] node[right, align = center] {$\push{\kick,22}$\\$y \ge 8$\\ $\resetinstr{y}$} ()
    ;

    \end{tikzpicture}    
    \caption{The production system~$\Psy$ (top) and the scheduler~$\Sch$ (bottom).}
    \label{fig:prdsys}
\end{figure}

Now, the challenge is to make sure that the control action $\kick$ is issued (only) when the box is in the $\piston$ location. Cassez et al.\ considered this problem in the setting where the controller has \emph{partial information} of the position of the box in the system~\cite{DBLP:conf/atva/CassezDLLR07}. 
Here, we are in the setting of control actions being delayed, i.e., there is some (known) delay from when a control action is scheduled until it actually is realized.  The timed automaton~$\Sch$ models the possibilities of scheduling $\kick$-actions. In particular, $\Sch$ has two  scheduling actions, $\push{\kick,13}$ respectively $\push{\kick,22}$, labeling the dotted transitions of $\Sch$. Taking these transitions schedules $\kick$  after a delay of $13$ respectively $22$ seconds.  The local clock~$y$ of $\Sch$ ensures that $\kick$ actions are scheduled with a minimum time-separation of $8$ seconds. Now the synthesis problem is to find a strategy for when $\push{\kick,13}$ respectively $\push{\kick,22}$ are scheduled, so that the box is appropriately $\kick$-ed in the $\piston$ location to reach the goal location~$\success$.  It can be deduced that when arriving to the $\hole$ location, the box will definitely be in the $\piston$ location within $20$ to $24$ seconds. Thus, scheduling $\push{\kick,22}$ precisely when arriving at the $\hole$ location provides an adequate strategy. In contrast, an increased timing uncertainty where the residence time of the box in each location is between $6$ and $10$ seconds implies that there is no strategy ensuring that the goal location is reached.

In the following, we introduce timed games under delayed control as informally described above. 
Note that in our setting, actions are scheduled and take effect after some delay.
This is in fact opposed to the setting considered by Martin Fränzle and his co-authors when they studied the discrete case of delayed control~\cite{DBLP:journals/acta/ChenFLMZ21}.
There, actions take effect immediately, but perception of the current state the system is in is delayed. 
As argued by Martin Fränzle and his co-authors, these two settings are actually equivalent.
For technical reasons, we prefer here, without loss of generality, to take the perspective of \emph{actions taking effect with delay} instead of \emph{perception being delayed}.

Our main contributions are as follows: We show that solving timed games under delayed control is in general undecidable, even for reachability and safety conditions and without uncontrollable actions (i.e., in a one-player setting). The undecidability crucially relies on an unbounded number of actions being pending. 
If we disallow this and impose that only a bounded number of actions can be pending, then we regain decidability. In fact, we present an algorithm with doubly-exponential running time based on a reduction to classical timed games (without delay). 
Finally, we show that our generalization of timed games to timed games under delayed control is conservative in the sense that each timed game (without delay) can be turned into an equivalent timed game under delayed control.
This implies that the $\exptime$-hardness of solving timed games also applies to solving timed games under delayed control.

\section{Preliminaries}

The set of nonnegative integers is denoted by $\nats$, the set of integers by $\ints$, and the set of nonnegative reals by $\nnreals$.

In the following subsections, we introduce timed automata~\cite{ta} and timed games~\cite{tg,DBLP:conf/stacs/MalerPS95}, mostly following the notation presented by Cassez et al.\ in~\cite{CassezDFLL05}.

\subsection{Clocks, Clock Constraints, and Zones}

Let $\clocks$ be a finite set of (real-valued) clocks. 
A (clock) valuation of $\clocks$ is a mapping~$v \colon \clocks \rightarrow \nnreals$. We write $\zeroval$ for the valuation mapping every clock to $0$. 
For $\clocks' \subseteq \clocks$, we write $\reset{v}{\clocks'}$ for the valuation mapping every clock~$x \in \clocks'$ to $0$ and every clock~$x \notin \clocks'$ to $v(x)$, i.e., the clocks in $\clocks'$ are reset.
For $\delta \in \nnreals$, we write $v+\delta$ for the valuation mapping each clock~$x$ to $v(x) + \delta$, i.e., $\delta$ units of time pass.

The set~$\clockcons{\clocks}$ of clock constraints over $\clocks$ is defined by the grammar
\[
\varphi ::= x \sim k \mid x - y \sim k \mid \varphi \wedge \varphi
\]
where $x, y$ range over $\clocks$ and $k$ ranges over $\ints$.
Let $\rectclockcons{\clocks}$ denote the clock constraints over $\clocks$ that do not use atomic constraints of the form~$x - y \sim k$, i.e., $\rectclockcons{\clocks}$ contains conjunctions of atomic constraints of the form~$x \sim k$.

For a clock constraint~$\varphi \in \clockcons{\clocks}$ and a valuation~$v$ of $\clocks$, we write $v\models \varphi$ if $v$ satisfies $\varphi$, which is defined as expected. 
We write $\eval{\varphi}$ for the set of valuations that satisfy $\varphi$.
A zone~$Z$ is a set of valuations such that $Z = \eval{\varphi}$ for some $\varphi \in \clockcons{\clocks}$.

\subsection{Timed Automata}

A timed automaton~$\aut = (L, \ell_0, \act, \clocks, E, \inv)$ consists of a finite set~$L$ of locations containing the initial location~$\ell_0$, a finite set~$\act$ of actions, a finite set~$\clocks$ of clocks, a finite set~$E \subseteq L \times \rectclockcons{\clocks} \times \act \times \pow{\clocks} \times L$ of edges, and a function~$\inv \colon L \rightarrow \rectclockcons{\clocks}$ assigning invariants to locations. 
As usual, we measure the size of a timed automaton in the number of locations, the number of clocks, and the largest constant appearing in a guard or invariant, as these factors alone determine the complexity of zone-based algorithms, which we rely on here.

A state of $\aut$ is a pair~$(\ell, v)$ where $\ell \in L$ and $v$ is a valuation of $\clocks$, the initial state is $(\ell_0, \zeroval)$.
In a state, either time can pass (as long as the state's invariant is satisfied) or an edge is taken: intuitively, $(\ell, g, a, \clocks',\ell')$ leads from $\ell$ to $\ell'$, is labeled by the action~$a$, can be executed if the guard~$g$ is satisfied by the current valuation, and the new valuation is obtained by resetting the clocks in $\clocks'$ and has to satisfy the invariant of $\ell'$.
Formally, $\aut$ induces the labeled transition system~$\tsys{\aut}$ whose set of vertices is the set of states of $\aut$, whose initial vertex is the initial state of $\aut$, and where we have two types of transitions:
\begin{itemize}
    
    \item Time transitions: For $\delta \in \nnreals$, there is a  transition~$\labtrans{(\ell,v)}{\delta}{(\ell,v')}$ if $v' = v + \delta$ and $v + \delta' \models \inv(\ell)$ for all $0 \le \delta' \le \delta$.
    
    \item Discrete transitions: For $a \in \act$, there is a transition $\labtrans{(\ell,v)}{a}{(\ell',v')}$ if there is an edge~$(\ell, g, a, \clocks',\ell') \in E$ such that $v \models g$, $v' = \reset{v}{\clocks'}$, and $v' \models\inv(\ell')$.
    
\end{itemize}

A run of $\aut$ is a (finite or infinite) alternating sequence of time and discrete transitions in $\tsys{\aut}$. 
If $\rho$ is a finite run, we write $\last(\rho)$ for the last state of $\rho$.
An infinite run~$\rho$ is time-divergent if the sum of the $\delta$ labeling time transitions in $\rho$ is infinite.

\begin{example}
The sequence~
\begin{align*}
&{}    (\on,x=0)\xrightarrow{8.5}
  (\on,x=8.5)\xrightarrow{\mov}(\hole,x=0)\xrightarrow{9.1}\\
&{}  
  (\hole,x=9.1)\xrightarrow{\mov}
  (\paint,x=0) \xrightarrow{9.5}(\paint,x=9.5)\xrightarrow{\mov}\\
&{} (\piston,x=0)\xrightarrow{8.1}(\piston,x=8.1)\xrightarrow{\kick} (\success,x=8.1)
\end{align*}
is a run of the timed automaton~$\Psy$ of Figure~\ref{fig:prdsys}.
\end{example}

\subsection{Timed Games}

A timed game~$\game = (\aut, \act_c, \act_u, F)$ consists of a timed automaton~$\aut$ whose set of actions is the disjoint union of the set~$\act_c$ of controllable actions and the set~$\act_u$ of uncontrollable actions, and where $F$ is a subset of $\aut's$ locations, used to define winning conditions.

A strategy for $\game$ is a partial function~$f$ mapping finite runs in $\tsys{\aut}$ to $\act_c \cup \set{\lambda}$ such that for every finite run~$\rho$ with $f(\rho) \neq \lambda$ there is a state~$s$ such that $\labtrans{\last(\rho)}{f(\rho)}{s}$.
We say that $f$ is state-based if $\last(\rho) = \last(\rho')$ implies $f(\rho) = f(\rho')$ for all finite runs~$\rho, \rho'$.

The set~$\outcomes{s}{f}$ of outcomes of $f$ from a state~$s$ of $\aut$ is defined inductively as follows:
\begin{itemize}
    \item $s \in \outcomes{s}{f}$, and
    \item if a finite run~$\rho$ is in $\outcomes{s}{f}$ and $
    \labtrans{\last(\rho)}{a}{s'}$ is a transition in $\tsys{\game}$, then $\labtrans{\rho}{a}{s'}$ is in $\outcomes{s}{f}$ if one of the following conditions holds:
    \begin{itemize}
        \item $a \in \act_u$,
        \item $a \in \act_c$ and $a = f(\rho)$, or 
        \item $a \in \nnreals$ and for all $0 \le \delta < a$ there exists a state~$s''$ of $\game$ such that $\labtrans{\rho}{\delta}{s''}$ and $f(\labtrans{\rho}{\delta}{s''}) = \lambda$.
    \end{itemize}
    \item An infinite run~$\rho$ is in $\outcomes{s}{f}$ if all of its finite prefixes are in $\outcomes{s}{f}$.
 \end{itemize}

A run~$\rho$ is maximal if it is infinite, or if it is finite and $\last(\rho)$ does not have any successors in $\tsys{\aut}$. 
Let $s_0$ denote the initial state of $\aut$. 
A strategy~$f$  
\begin{itemize}
    \item satisfies reachability if every maximal run in $\outcomes{s_0}{f}$ visits a location in $F$ at least once,
    \item satisfies safety if every (finite or infinite) run in $\outcomes{s_0}{f}$  visits only locations in $F$, and
    \item satisfies time-divergence if every maximal run is infinite and time-divergent.
\end{itemize}

\begin{proposition}[\cite{DBLP:conf/icalp/JurdzinskiT07}]
\label{prop:timedgames}
The following problem is \exptime-complete: Given a timed game, is there a strategy that satisfies reachability?
\end{proposition}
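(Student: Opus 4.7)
The plan is to establish membership in $\exptime$ and $\exptime$-hardness separately, invoking the classical region-graph construction for the upper bound and a reduction from a known $\exptime$-complete combinatorial game for the lower bound.

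For the upper bound, I would reduce the problem to solving a two-player reachability game on a finite graph whose size is exponential in the input. Concretely, the region equivalence on clock valuations (parameterised by the largest constant~$K$ appearing in a guard or invariant) has index bounded by a term of order $|\clocks|! \cdot 2^{|\clocks|} \cdot (K+1)^{|\clocks|}$, and when combined with the locations of $\aut$ yields the usual region graph. Each region~$r$ in a location~$\ell$ gives rise to finitely many successors corresponding to time-successor regions reachable under $\inv(\ell)$ and to discrete edges whose guards are satisfied in $r$. Controllable and uncontrollable edges are assigned to the respective players, and in every state the controller may either propose a controllable action (if available) or let time pass, while the environment may preempt at any earlier time-successor region with a delay or an uncontrollable action. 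The resulting finite two-player reachability arena has size exponential in the input and can be solved by standard attractor computation in time polynomial in the arena, hence in $\exptime$ overall. Correctness rests on the well-known fact that region-equivalent valuations are time-abstract bisimilar, so that strategies on the region graph lift to strategies in the original timed game and vice versa.

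For the lower bound, I would reduce from a problem already known to be $\exptime$-hard, for example reachability in countdown games or acceptance by a polynomial-space alternating Turing machine. A standard choice is countdown games: such a game maintains a single counter and at each turn one of the two players picks a transition that decrements the counter by a prescribed amount, losing if the counter would go negative. One can encode this faithfully using a single clock~$c$ to represent the counter value: a transition that decrements by $k$ is simulated by guarding the corresponding edge with $c = k$, resetting~$c$, and moving to the target location, where the player in charge is reflected by making the edge controllable or uncontrollable. Turn-taking and the winning region of the countdown game are encoded using additional discrete locations and suitable invariants that prevent $c$ from exceeding the maximal admissible counter value.

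The main obstacle will be ensuring that the time-based simulation in the hardness reduction is faithful: since clocks range over $\nnreals$, either player can in principle attempt to \emph{cheat} by letting time advance by a non-integer amount or by waiting past a scheduled instant. The remedy is to use guards of the form $c = k$, which hold only instantaneously, together with trap locations and urgency constraints that make any deviation from the intended simulation disadvantageous to the deviating player (routing the controller's cheating into a sink from which $F$ is unreachable, and the environment's cheating into a sink trivially winning for the controller). Once faithfulness is established, the equivalence of winning strategies between the countdown game and the constructed timed reachability game is immediate, yielding $\exptime$-hardness and completing the proof.
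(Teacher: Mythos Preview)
The paper does not prove this proposition at all: it is stated as a cited result (the reference is Jurdzi\'nski and Trivedi, ICALP~2007), and the only surrounding text points to the earlier decidability results of Asarin, Maler, Pnueli and Sifakis and to the zone-based algorithm of Cassez et al.\ implemented in UPPAAL~Tiga. There is therefore no proof in the paper to compare your proposal against.

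That said, your outline is essentially the standard argument and is in line with how the cited literature establishes the result. The region-graph reduction gives the \exptime\ upper bound in the expected way, and the lower bound via countdown games is precisely the route taken in the cited reference (countdown games are introduced and shown \exptime-hard in closely related work, and the reduction to timed reachability uses a single clock exactly as you describe). One minor point to watch if you flesh this out: in the usual formulation of countdown games the two players alternate in a specific way---one player selects the decrement amount, the other resolves the nondeterministic choice of successor---so make sure the controllable/uncontrollable split in your encoding matches that structure rather than the looser ``one of the two players picks a transition'' phrasing. Your handling of the real-valued cheating issue via equality guards and trap locations is the standard remedy and is adequate at this level of detail.
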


Automatic synthesis of  real-time controllers for timed games was first considered by Asarin, Maler, Pnueli and Sifakis~\cite{tg,DBLP:conf/stacs/MalerPS95}, who proved decidability for both safety and reachability conditions. Also, they showed that state-based strategies suffice in both cases. 
An efficient zone-based on-the-fly synthesis algorithm for timed games was presented by Cassez et al.~\cite{CassezDFLL05} and is available in the tool UPPAAL Tiga~\cite{DBLP:conf/cav/BehrmannCDFLL07}.

\begin{example}
Reconsider the timed game $\Psy$ from Figure~\ref{fig:prdsys} with $\act_c = \set{\kick}$, $\act_u=\set{\mov}$, and $F = \set{\success}$. Given the objective to reach $F$, it can be seen that the following partial function $f_\success$ is a winning (state-based) strategy (note that we use a zone-based description):
\begin{center}
\begin{tabular}{ll}
       $ f_\success(\on,0\leq x\leq 10) =\lambda $\qquad \qquad \qquad  & $f_\success(\hole,0\leq x\leq 10)=\lambda$ \\[1ex]
       $f_\success(\paint,0\leq x\leq 10)=\lambda$  & $f_\success(\piston,0\leq x < 8)=\kick$
    \end{tabular}    
\end{center}

\end{example}

\section{Timed Games under Delayed Control}

In this section, we introduce our model of timed games under delayed control, which is a generalization of timed games, and give an example. 
The following sections are then devoted to the study of this model.

In a classical timed game, when a controllable action~$a \in \act_c$ is selected by a strategy, then an edge labeled by $a$ is taken immediately to continue the run.
In timed games under delayed control, when a strategy selects an action~$a \in \act_c$, then this comes with some associated delay~$t$ and an edge labeled by $a$ is only taken after $t$ units of time have passed. 
Thus, a strategy intuitively schedules $a$ for execution in $t$ units of time.
Thus, between the selection of an action and the actual execution, other (previously scheduled) controllable actions and uncontrollable actions may be executed, which can change the state of the game.
To reiterate, a strategy in a timed game under delayed control selects pairs~$(a,t)$ of actions and delays, not actions. Thus, we have three types of actions now:
\begin{itemize}
    \item uncontrollable actions,
    \item scheduling actions of the form~$\push{a,t}$ where $a \in \act_c$ and $t$ is a delay, and
    \item actions in $\act_c$. 
\end{itemize}
As the actions in $\act_c$ are now no longer under direct control by a strategy, we call them \emph{control} actions instead of controllable actions. The controllable actions, i.e., those that can be selected by a strategy, are the scheduling actions.

Formally, a timed game under delayed control~$\delaygame = (\aut, \act_c, \act_u, T,F)$ consists of a timed automaton~$\aut$ whose set of actions is the disjoint union of 
\begin{itemize}
    \item the set~$\act_c$ of control actions,
    \item the set~$\act_u$ of uncontrollable actions, and
    \item the set~$\Push(\act_c, T) = \set{\push{a,t} \mid a \in\act_c, t \in T}$ of scheduling actions, 
\end{itemize}
where $T \subseteq \nats$ is a finite set of natural numbers, and where $F$ is a subset of $\aut$'s locations, used to define winning conditions.

We require the following form of deadlock freedom: For each location~$\ell$ of $\aut$, every clock valuation~$v$, and every control action~$a \in \act_c$, there is an edge~$(\ell, g, a, \clocks', \ell')$ in $\aut$ such that $v \models g$ and $\reset{v}{\clocks'} \models\inv(\ell')$.
This implies that a previously scheduled action can be executed in every possible state.

\begin{remark}
\label{remark_deadlock}
This form of deadlock freedom can always be achieved by adding a fresh sink state and fresh edges that are enabled exactly for those states and control actions for which no original edge is enabled.
When the fresh sink is not in $F$, then the resulting timed game under delayed control is equivalent to the original one.
\end{remark}

A schedule~$\sigma$ is a sequence~$(a_0, t_0)\cdots (a_{k-1}, t_{k-1}) \in (\act_c \times \nnreals)^*$ such that $t_0 \le t_1 \le \cdots \le t_{k-1}$.
A state of $\delaygame$ is a triple~$(\ell, v, \sigma)$, where $(\ell,v)$ is a state of $\aut$ and $ \sigma$ is a schedule. 
The initial state is $(\ell_0, \zeroval, \varepsilon)$, where $(\ell_0, \zeroval)$ is the initial state of $\aut$, i.e., the schedule is initially empty.
Intuitively, a schedule contains the scheduled control actions~$a_i$ and timestamps~$t_i$, which the intuition that $a_i$ has to be executed in $t_i$ units of time. Thus, when time passes, say $\delta$ units, then the timestamps are decreased by $\delta$.

Now, in a state of $\delaygame$, time can pass, an edge with an uncontrollable action can be taken, a control action~$a$ can be scheduled for execution in $t$ units of time (action~$\push{a,t}$), or a scheduled action can be executed (if its timestamp is zero).
Formally, $\delaygame$ induces the labeled transition system~$\tsys{\delaygame}$ whose set of vertices is the set of states of $\delaygame$, whose initial vertex is the initial state of $\delaygame$, and where we have four types of transitions: 
\begin{itemize}
    \item Time transitions: For $\delta \in \nnreals$, there is a transition \[\labtrans{(\ell, v, (a_0, t_0)\cdots (a_{k-1}, t_{k-1}))}{\delta}{(\ell, v', (a_0, t_0 - \delta)\cdots (a_{k-1}, t_{k-1} - \delta))}\] if $v' = v + \delta$, $\delta \le t_0$, and $v + \delta' \models \inv(\ell)$ for all $0 \le \delta' \le \delta$. Note that the timestamps in the schedule all decrease by $\delta$.

    \item Uncontrolled discrete transitions: For $a \in \act_u$, there is a transition 
    \[\labtrans{(\ell, v, \sigma)}{a}{(\ell', v', \sigma)}\] 
    if there is an edge~$(\ell, g, a, \clocks',\ell')\in E$ such that $v \models g$, $v' = \reset{v}{\clocks'}$, and $v' \models \inv(\ell')$.

    \item Scheduled discrete (control) transitions: For $a \in \act_c$, there is a transition 
    \[\labtrans{(\ell, v, (a_0, t_0)\cdots (a_{k-1}, t_{k-1}))}{a}{(\ell', v', (a_1, t_1)\cdots (a_{k-1}, t_{k-1}))}\]
    if $a = a_0$, $t_0 = 0$, and there is an edge~$(\ell, g, a, \clocks',\ell')\in E$ such that $v \models g$, $v' = \reset{v}{\clocks'}$, and $v' \models \inv(\ell')$.
    Due to the deadlock freedom we have imposed, each vertex of the form~$(\ell, v, (a_0, t_0)\cdots (a_{k-1}, t_{k-1}))$ with $t_0=0$ has at least one outgoing transition labeled by $a_0$, which removes $(a_0, t_0)$ from the schedule.

    \item Scheduling discrete (control) transitions: For $a \in \act_c$ and $t \in T$, there is a transition
    \vspace{-1.5em}
    \begin{multline*}
    {(\ell, v, (a_0, t_0)\cdots (a_{k-1}, t_{k-1}))}\xrightarrow{\push{a,t}}\\{(\ell', v', (a_0, t_0)\cdots (a_{i-1}, t_{i-1})  (a,t) (a_{i}, t_{i}) \cdots  (a_{k-1}, t_{k-1})}
       \end{multline*}
    if $t_{i-1} \le t \le t_i$ (where $t_{-1} =0$) and there is an edge~$(\ell, g, \push{a,t}, \clocks',\ell')\in E$ such that $v \models g$, $v' = \reset{v}{\clocks'}$, and $v' \models \inv(\ell')$.
Note that there is an (uncontrollable) degree of freedom in where $(a,t)$ is added to the schedule if there is (at least) one pair~$(a_j,t_j)$ with $t_j = t$. In that situation, $(a,t)$ can be added before or after $(a_j,t_j)$ and transitions corresponding to both choices are in $\tsys{\delaygame}$. 
In general, if there are $k'$ pairs with timestamp~$t_j = t$, then there are $k'+1$ places where $(a,t)$ can be added. Said differently, actions are ordered when they are inserted in the schedule and then executed in that order, but we take all (valid) orderings into account. 
\end{itemize}

In a timed game under delayed control, control actions have to be scheduled. 
Once they are scheduled, they will be executed after their specified delay has passed. 
Hence, a strategy in a timed game under delayed control only schedules control actions.
Formally, a strategy~$f$ for $\delaygame$ is a partial function mapping finite runs in $\tsys{\delaygame}$ to $\Push(\act_c, T) \cup \set{\lambda}$ such that for every finite run~$\rho$ with $f(\rho) \neq \lambda $ there is a state~$s$ such that $\labtrans{\last(\rho)}{f(\rho)}{s}$.

The set~$\outcomes{s}{f}$ of outcomes of $f$ from a state~$s$ of $\delaygame$ is defined inductively as follows:
\begin{itemize}
    \item $s \in \outcomes{s}{f}$, and
    \item if a finite run~$\rho$ is in $\outcomes{s}{f}$ and $\labtrans{\last(\rho)}{a}{s'}$ is a transition in $\tsys{\delaygame}$, then $\labtrans{\rho}{a}{s'}$ is in $\outcomes{s}{f}$ if one of the following conditions holds:
    \begin{itemize}
        \item $a \in \act_c \cup \act_u$ (recall that control actions are scheduled and then executed after their delay has passed),
        \item $a \in \Push(\act_c, T)$ and $a = f(\rho)$, or 
        \item $a \in \nnreals$ and for all $0 \le \delta < a$ there exists a state~$s''$ of $\delaygame$ such that $\labtrans{\rho}{\delta}{s''}$ and $f(\labtrans{\rho}{\delta}{s''}) = \lambda$.
    \end{itemize}
    \item An infinite run~$\rho$ is in $\outcomes{s}{f}$ if all of its finite prefixes are in $\outcomes{s}{f}$.
 \end{itemize}

A run~$\rho$ is maximal if it is infinite, or if it is finite and $\last(\rho)$ does not have any successors in $\tsys{\delaygame}$. 
Let $s_0$ denote the initial state of $\delaygame$. 
A strategy~$f$  
\begin{itemize}
    \item satisfies reachability if every maximal run in $\outcomes{s_0}{f}$ visits a location in $F$ at least once,
    \item satisfies safety if every (finite or infinite) run in $\outcomes{s_0}{f}$  visits only locations in $F$, and
    \item satisfies time-divergence if every maximal run is infinite and time-divergent.
\end{itemize}

\begin{example}
The product of  $\Psy$ and $\Sch$ from Figure~\ref{fig:prdsys} constitutes a timed game under delayed control, where $\kick$-actions can be scheduled with a delay of $13$ or $22$ with a minimum time separation of 8 seconds.  The objective of reaching $\success$ is satisfied by, e.g.,  one of the following two strategies~$f_\success^{22}$ and $f_\success^{13}$: 

\begin{align*} 
   & f_\success^{22}(\on,0\leq x\leq 10,\varepsilon)=\lambda                       \\
   & f_\success^{22}(\hole,0\leq x<2\wedge y\geq 8,\varepsilon)=\push{\kick,22}  \\[1ex]
   & f_\success^{13}(\on,0\leq x\leq 10,\varepsilon )=\lambda \\
   & f_\success^{13}(\hole,0\leq x\leq 10,\varepsilon )=\lambda \\
   & f_\success^{13}(\paint,0\leq x < 3\wedge y\geq 8,\varepsilon)=\push{\kick,13}
    \end{align*}
\end{example}

In the following, we study the problem of determining whether, for a given timed game under delayed control, there is a strategy satisfying a fixed winning condition from the list above.
In Section~\ref{sec_undec}, we prove that for unbounded schedules, the problem is undecidable both for reachability and safety conditions.
Thus, we turn our attention to bounded schedules in Section~\ref{sec_bounded}: This restriction is sufficient to obtain decidability (in fact membership in $\twoexptime$). This result is shown by modeling a timed game under delayed control with bounded schedules as a timed game.
Finally, in Section~\ref{sec_cons}, we prove that timed games under delayed control are a conservative extension of timed games, i.e., each timed game can be modeled by a timed game under delayed control.

\section{Undecidability for Unbounded Schedules}
\label{sec_undec}

In this section, we show that determining whether a given timed game under delayed control has a strategy that satisfies reachability (safety) is undecidable.
This result relies on the unboundedness of the schedule. It is well-known that finite automata with an unbounded queue have an undecidable emptiness problem, as one can use the queue to simulate the tape of a Turing machine. 
Here, we use similar ideas, but prefer to simulate two-counter machines using timed games under delayed control.
Just as problems for automata with a queue are already undecidable, we construct our timed games under delayed control without uncontrollable actions, i.e., undecidability stems from the queue, not the fact that we consider a game.

A two-counter machine~$\mach$ is a sequence
\[
(0:  \instr_0) (1:  \instr_1) \cdots (k-2:  \instr_{k-2})(k-1:  \stopp), 
\]
where the first element of a pair~$(\ell: \instr_\ell)$ is the line number and $\instr_\ell$ for $\ell \in \set{0, \ldots, k-2}$ is an instruction of the form
\begin{itemize}
    \item $\inc{i}$ with $i \in\set{0,1}$, 
    \item $\dec{i}$ with $i \in\set{0,1}$, or
    \item $\ite{i}{\ell'}{\ell''}$ with $i \in\set{0,1}$ and $\ell',\ell'' \in \set{0, \ldots,k-1}$. 
\end{itemize}
A configuration of $\mach$ is of the form~$(\ell, c_0, c_1)$ with $\ell \in \set{0, \ldots, k-1}$ (the current line number) and $c_0, c_1\in\nats$ (the current contents of the counters~$\texttt{X}_0$ and $\texttt{X}_1$, respectively). 
The initial configuration is~$(0,0,0)$ and the unique successor configuration of a configuration~$(\ell,  c_0, c_1)$ is defined as follows:
\begin{itemize}
    \item If $\instr_\ell = \inc{i}$, then the successor configuration is $(\ell +1, c_0', c_1')$ with $c_i' = c_i +1$ and $c_{1-i}' = c_{1-i}$.
    \item If $\instr_\ell = \dec{i}$, then the successor configuration is $(\ell +1, c_0', c_1')$ with $c_i' = \max\set{c_i -1,0}$ and $c_{1-i}' = c_{1-i}$.
    \item If $\instr_\ell = \ite{i}{\ell'}{\ell''}$ and $c_i = 0$, then the successor configuration is $(\ell', c_0, c_1)$.
    \item If $\instr_\ell = \ite{i}{\ell'}{\ell''}$ and $c_i > 0$, then the successor configuration is $(\ell'', c_0, c_1)$.
    \item If $\instr_\ell = \stopp$, then $(\ell, c_0, c_1)$ has no successor configuration.
\end{itemize}
The unique run of $\mach$ (starting in the initial configuration) is defined as expected.
It is either finite (line~$k-1$ is reached) or infinite (line~$k-1$ is never reached).
In the former case, we say that $\mach$ terminates.

\begin{proposition}[\cite{Minsky67}]
The following problem is undecidable: Given a two-coun\-ter machine~$\mach$, does $\mach$ terminate?
\end{proposition}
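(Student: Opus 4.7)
The plan is to reduce from the halting problem of Turing machines, which is undecidable, via an intermediate counter model with more than two counters. The argument proceeds in two main reductions, following Minsky's classical construction.

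First, I would show that $k$-counter machines (for a suitable $k$, say $k=3$) can simulate arbitrary Turing machines. Given a Turing machine~$M$ with tape alphabet of size~$b$, a configuration of~$M$ is encoded as follows: the current state is reflected in the current line number of the counter machine, and the contents of the tape on either side of the head are encoded as two natural numbers $L$ and $R$ written in base~$b$, each stored in one counter. A third counter serves as scratch space. The one-step simulation of~$M$ requires reading the least significant base-$b$ digit of~$L$ (which is the currently scanned symbol), replacing it with a new digit, and shifting the head left or right, which amounts to multiplying or dividing $L$ and~$R$ by~$b$. Each of these operations can be programmed using only the primitive instructions together with the scratch counter and a block of $\mathcal{O}(b)$ line numbers for branching on the current digit.

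Second, I would show that two counters suffice to simulate a three-counter machine. The key idea is a G\"odel-style encoding: the three counters $c_0, c_1, c_2$ are packed into a single natural number $n = 2^{c_0} \cdot 3^{c_1} \cdot 5^{c_2}$, held in one of the two counters of the simulator, while the other is used as scratch. Incrementing $c_i$ corresponds to multiplying~$n$ by the prime $p_i \in \set{2,3,5}$, decrementing $c_i$ to dividing $n$ by~$p_i$, and testing $c_i = 0$ to checking whether $p_i \nmid n$. These arithmetic operations on~$n$ are implemented by small loops that repeatedly transfer blocks of $p_i$ units between the two counters, rebuilding the encoded value at the end; divisibility tests are performed while draining one counter into the other by groups of~$p_i$ and checking whether a remainder is left.

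Composing the two reductions shows that two-counter machines can faithfully simulate Turing machines, so their halting problem inherits undecidability from the halting problem of Turing machines. The main obstacle in carrying this out rigorously is the bookkeeping: one must carefully verify that the arithmetic subroutines (multiply by~$p$, divide by~$p$, test divisibility by~$p$) can indeed be expressed using only increment, decrement, and zero-test, and that these subroutines compose without corrupting the encoded value when intermediate transfers are interrupted by control flow. Once the subroutines are in place, the simulation is straightforward and halting of the two-counter machine coincides with halting of the simulated Turing machine.
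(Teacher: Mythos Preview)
Your sketch is correct and follows the classical Minsky construction. Note, however, that the paper does not actually prove this proposition: it is stated with a citation to~\cite{Minsky67} and used as a black box for the subsequent reduction. So there is no ``paper's own proof'' to compare against; your two-step reduction (Turing machine $\to$ three-counter machine via base-$b$ tape encoding, then three counters $\to$ two via the $2^{c_0}3^{c_1}5^{c_2}$ packing) is precisely the argument behind that citation.
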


Next, we show that termination of two-counter machines reduces to the existence of strategies satisfying reachability, which implies that the latter problem is undecidable as well.
In the proof, we make crucial use of the fact that the schedule may be unbounded. 

\begin{theorem}
The following problem is undecidable: Given a timed game under delayed control, is there a strategy that satisfies reachability?
\end{theorem}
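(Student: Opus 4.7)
The plan is to reduce from the (undecidable) halting problem for two-counter machines. Given a machine $\mach$, I would construct a timed game under delayed control~$\delaygame_\mach$ with control actions $\zero$, $\one$ and $\termsymb$, no uncontrollable actions, and delay set $T = \set{1}$, such that $\mach$ terminates iff $\delaygame_\mach$ admits a strategy satisfying reachability of a designated goal location. The central idea is to encode the counter values in the (unbounded) schedule: at the start of every simulation round, the schedule contains exactly $c_0$ pending $\zero$-tokens and $c_1$ pending $\one$-tokens, separated and bounded by a constant number of $\termsymb$-markers, all carrying timestamp $0$. A round consists of firing all of these tokens in the order they appear in the schedule while simultaneously rescheduling tokens via $\push{\cdot,1}$ actions to build up the schedule of the next round. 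Exactly one time unit then passes and the next round begins.

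For each instruction of $\mach$, a sub-protocol, driven by dedicated locations of $\delaygame_\mach$, transforms the round-schedule from the pre-state into that of the post-state. An instruction $\inc{i}$ is simulated by rescheduling every fired token and additionally issuing one extra $\push{\zero,1}$ (for $i=0$) or $\push{\one,1}$ (for $i=1$) at a designated phase of the round; a $\dec{i}$ by rescheduling every fired token except (at most) one of the appropriate type; and $\ite{i}{\ell'}{\ell''}$ by branching on the identity of the first token fired in the $i$-th block---a counter token means $c_i > 0$ and the location advances to the sub-protocol for line $\ell''$, while encountering the marker $\termsymb$ first means $c_i = 0$ and the location advances to $\ell'$. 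Reaching the $\stopp$ instruction leads into the unique location in $F$.

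The main obstacle, and the heart of the construction, is to arrange the locations so that any strategy reaching the goal must faithfully simulate $\mach$. I would achieve this by refining each intended scheduling step into a dedicated micro-location whose only useful outgoing edge is labeled by the prescribed scheduling action; every alternative option, such as prematurely letting a scheduled token fire before issuing the required $\push$, or issuing a $\push$ that is not prescribed at the current phase, is routed to a sink location disjoint from $F$ (extended with fresh self-consuming edges for pending control actions in order to preserve the deadlock-freedom requirement, as in the remark following the definition). Under this regime, a strategy can neither inflate nor deflate the schedule beyond the prescribed increments and decrements; and since the outcome of a zero-test is directly read off from which token fires first in the corresponding block, the zero-test cannot be cheated either. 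Hence the sequence of rounds tracks the unique run of $\mach$ exactly, and a reachability-winning strategy exists iff $\mach$ halts, completing the reduction.
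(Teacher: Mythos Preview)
Your high-level plan matches the paper's: reduce from the halting problem for two-counter machines, use control actions $\zero,\one,\termsymb$, no uncontrollable actions, $T=\{1\}$, and store the counter contents as multiplicities of $\zero$ and $\one$ in the unbounded schedule. The gap is in your timestamp discipline.

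You stipulate that at the start of each round every pending token carries timestamp~$0$, and that during the round you fire the tokens one by one while issuing $\push{\cdot,1}$ actions to assemble the next round's schedule. But recall the semantics of scheduling transitions: when $(a,t)$ is inserted and the schedule already contains an entry with the same timestamp~$t$, the insertion position among equal-timestamp entries is an \emph{uncontrollable} choice. Since all your reschedulings happen at the same instant (global time~$0$ within the round), from the second $\push{\cdot,1}$ onward every new token collides with previously pushed tokens at timestamp~$1$, and the adversary may therefore permute the next round's schedule arbitrarily. This scrambles the positions of your $\termsymb$-markers and destroys the block structure on which your zero-test (``identity of the first token fired in the $i$-th block'') depends, so the simulation cannot be enforced.

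The paper's construction is built precisely around this obstacle. It maintains a \emph{well-timed} invariant---timestamps strictly increasing in $(0,1]$---using a single clock~$x$: the $\reload{a}$ gadget lets time advance until the head token is due (a strictly positive amount), executes it while resetting~$x$, and then, under the invariant~$x\le 0$, immediately issues the $\push{\cdot,1}$. Because positive time elapsed, the previously last timestamp is now strictly below~$1$, so the new token lands strictly at the end with no insertion ambiguity. The same idea (forcing $x>0$ before the final $\push{\termsymb,1}$) is used in the increment gadget. Your proposal needs an analogous mechanism to separate timestamps; without it the reduction does not go through.
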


\begin{proof}
Given a two-counter machine
\[
\mach = (0:  \instr_0) (1:  \instr_1) \cdots (k-2:  \instr_{k-2})(k-1:  \stopp), 
\]
we construct a timed game under delayed control~$\delaygame_\mach$ that has a strategy that satisfies reachability if and only if $\mach$ terminates.

The timed automaton of $\delaygame_\mach$ has locations~$0,1,\ldots,k-1$ as well as some auxiliary locations (typically named~$m$, possibly with decorations), the initial location~$\ell_0$, and a sink location~$\ell_S$. 
The set~$F$ of target locations contains only the location~$k-1$, which is a sink as well.
The timed automaton uses a single clock~$x$ and the set of control actions is $\act_c = \set{\zero, \one, \termsymb}$ while the set of uncontrolled actions is empty. 
Finally, the set~$T$ of delays with which an action can be scheduled is the singleton~$\set{1}$.

A schedule~$\sigma$ encodes a valuation~$(c_0, c_1)$ of the two counters~$\texttt{X}_0$ and $\texttt{X}_1$  where $c_0$ ($c_1$) is the number of~$\zero$'s ($\one$'s) in $\sigma$, i.e., the action~$\termsymb$ is irrelevant here. Later, it is used as a divider to implement the correct update of the counters.
Thus, configurations of $\mach$ are encoded by pairs of locations of $\delaygame_\mach$ and schedules. Note that the clock valuation is irrelevant for the encoding. 

In our construction, we only consider schedules~$\sigma = (a_0, t_0)\cdots (a_{k-1}, t_{k-1})$ satisfying certain properties.
We say that $\sigma$ is well-timed if we have $0 < t_0 < t_1 < \cdots t_{k-1} \le 1$, i.e., the first action cannot be executed immediately, no two actions are scheduled with the same timestamp, and all of them have to be executed before one unit of time has passed.
We use the clock to enforce well-timedness.
Furthermore, we say that $\sigma$ is well-sorted if $k \ge 1$, i.e., $\sigma$ is nonempty, and $a_j \in \set{\zero,\one}$ for all $j < k-1$ and $a_{k-1}=\termsymb$, i.e., exactly the last scheduled action is $\termsymb$. 

In the following, we explain how to simulate the computation of $\mach$ by $\delaygame_\mach$ using gadgets implementing the three types of instructions.
As already alluded to, we use the schedule to encode valuations of the two counters of $\mach$, the locations of the timed automaton to encode the lines of $\mach$ and present gadgets that allow a strategy to simulate the unique run of $\mach$. 
If location~$k-1$ is reached, then the reachability objective is satisfied.
Note that as both $\ell_S$ and the target location~$k-1$ are sinks, a strategy satisfying reachability has to avoid reaching $\ell_S$.

We begin by introducing an auxiliary gadget (shown in Figure~\ref{fig_reload}), which takes the first action from the schedule (if necessary, after some time has passed in location~$\ell$) and reinserts it immediately at the end of the schedule. 
In $\tsys{\delaygame}$, the edges in the gadget induce a (unique) run from a state~$(\ell, v, \sigma)$ with a well-timed $\sigma = (a_0, t_0)\cdots (a_{k-1}, t_{k-1})$ with $a_0 = a$: 
In $\ell$, time has to pass for $t_0>0$ units of time, which implies that the last time stamp in the resulting schedule is strictly smaller than $1$.
Then $a_0 = a$ must be processed (which resets clock~$x$). 
Thus, location~$m$ has to be left immediately (due to the invariant~$x \le 0$) via the edge labeled~$\push{a,1}$ which adds $(a,1)$ to the schedule and leads to location~$\ell'$.
Note that the first timestamp in the resulting schedule is $t_1 - t_0$, which is greater than zero due to $\sigma$ being well-timed. Hence, the resulting schedule~$\sigma'$ (when reaching $\ell'$) is well-timed as well.
Furthermore, $\sigma$ and $\sigma'$ contain the same actions with the same multiplicity, i.e., they encode the same counter values.
Also, the dashed edges leading to $\ell_S$ are not enabled when $m$ is reached, as it has to be left immediately, while the next action in the schedule has a non-zero timestamp due to well-timedness.
These edges are just drawn for completeness to satisfy deadlock freedom.
In the following, we use an edge with label~$\reload{a}$ leading from $\ell$ to $\ell'$ to represent the gadget from Figure~\ref{fig_reload}.

\begin{figure}[h]
    \centering
    \begin{tikzpicture}[thick]
    \node[state] (0) at (0,0) {$\ell$};
    \node[state,label=below:{$x \le 0$}] (1) at (3,0) {$m$};
    \node[state] (2) at (6,0) {$\ell'$};
    \node[state,dashed] (s) at (3,1.5) {$\ell_S$};
    
    \path
    (0) edge node[above] {$a$} node[below] {$\resetinstr{x}$} (1)
    (1) edge node[above] {$\push{a,1}$} (2)
    (1) edge[dashed] node[right] {$\act_c$} (s);
    
\end{tikzpicture}
\caption{The gadget~$\reload{a}$ for some $a \in \act_c$.}
\label{fig_reload}
\end{figure}

Furthermore, in Figure~\ref{fig_init}, we present the initialization gadget, which adds the termination symbol~$\termsymb$ to the (initially) empty schedule on an edge leading from the initial location~$\ell_0$ to the location~$0$ representing the initial line of $\mach$. 
Due to the guard~$x \le 0$ on the edge, this action has to be scheduled immediately, i.e., there is no choice to make for a strategy.
Thus, there is a unique state reachable from the initial state of $\delaygame_\mach$ via a single transition, which has a well-timed and well-sorted schedule and encodes the initial configuration of $\mach$.
Again, the dashed edges to the sink location are just for deadlock-freedom, they are never enabled in the initial state, as the initial schedule is empty.

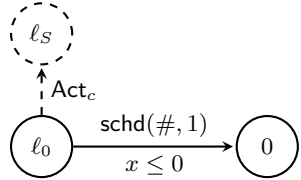
\begin{figure}[h]
    \centering
    \begin{tikzpicture}[thick]
         \node[state] (0) at (0,0) {$\ell_0$};
    \node[state] (1) at (3,0) {$0$};
    \node[state,dashed] (s) at (0,1.5) {$\ell_S$};
    
    \path
    (0) edge node[above] {$\push{\termsymb,1}$} node[below,align=center] {$x \le 0$} (1)
    (0) edge[dashed] node[right] {$\act_c$} (s);
\end{tikzpicture}
\caption{The initialization gadget.}
\label{fig_init}
\end{figure}

In the following, we present the gadgets simulating the instructions of a two-counter machine, i.e., increment, decrement, and jump for counter~$\texttt{X}_0$. The gadgets for counter~$\texttt{X}_1$ (and their explanations) are analogous. Afterwards, we present the gadgets for the stop location and the sink~$\ell_S$.

\begin{figure}[h]
    \centering
    \begin{tikzpicture}[thick]
    
    \node[state] (l) at (0,0) {$\ell$};
    \node[state,label=below:{$x \le 0$}] (1) at (3,0) {$m_0$};
    \node[state] (2) at (6,0) {$m_1$};
    \node[state] (lp) at (9,0) {$\ell+1$};
    \node[state] (f) at (4.5,1.4) {$\ell_S$};
    
    \path
    (l) edge[loop above] node[above] {$\reload{\zero}$} ()
    (l) edge[loop below] node[below] {$\reload{\one}$} ()
    (l) edge node[above] {$\termsymb$} node[below] {$\resetinstr{x}$} (1)
    (1) edge[bend right=0] node[above] {$\push{\zero,1}$} node[below] {} (2)    
    (2) edge node[above] {$\push{\termsymb,1}$} node[below] {$x > 0$} (lp)
    (1) edge[bend left] node[left] {$\zero,\one,\termsymb$} (f)
    (2) edge[bend right] node[right] {$\zero,\one,\termsymb$} (f);    
\end{tikzpicture}
\caption{The gadget for the instruction~$\ell: \inc{0}$.}
\label{fig_inc}
\end{figure}
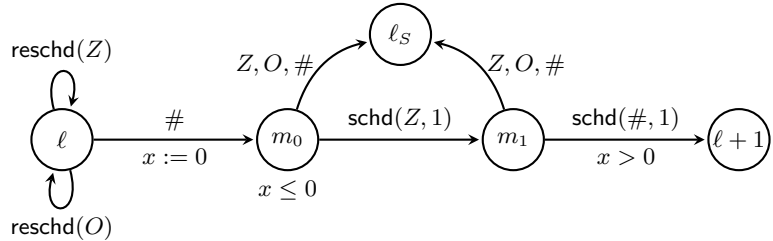

In $\tsys{\delaygame}$, the edges in the increment-gadget (shown in Figure~\ref{fig_inc}) induce runs of the following form from a state~$(\ell, v, \sigma)$ with a well-timed \emph{and} well-sorted $\sigma$: 
First, all actions~$\zero$ and $\one$ are rescheduled, resulting in a well-timed schedule~$\sigma'$ with first action~$\termsymb$.
Then, time passes until the $\termsymb$ in $\sigma'$ needs to be executed (which removes it from the schedule), leading to location~$m_0$ while resetting $x$.
As a nonzero amount of time passes in $\ell$, the last timestamp of the resulting schedule~$\sigma''$ is strictly smaller than $1$.
Furthermore, the first action in $\sigma''$ has a delay that is still greater than $0$, due to well-timedness of $\sigma$.
Then, due to the invariant~$x \le 0$ of $m_0$, a $\zero$ has to be added at the end of $\sigma''$, leading to location~$m_1$. 
This implies that the resulting schedule~$\sigma'''$ is well-timed as well, as the timestamp before the newly added $\zero$ is strictly smaller than $1$, as just argued.
Now, due to the guard~$x > 0$ on the edge from $m_1$ to $\ell+1$, some time (determined by the strategy) has to pass before the edge to $\ell+1$ can be taken.
This time has to be smaller than the time stamp of the first action in $\sigma'''$, as this otherwise leads to the sink~$\ell_S$, implying that the strategy does not satisfy reachability.
But if the edge is taken early enough, then the location~$\ell+1$ is reached with a well-timed and well-sorted schedule that has one more $\zero$ than $\sigma$ and the same number of $\one$'s as $\sigma$, i.e., the increment of counter~$\texttt{X}_0$ is correctly simulated.
Note that these runs only differ in the time that is passed in location~$m_1$ and a strategy selects one such run, and can always pick one that leads to $\ell+1$ and not to the sink. This is indeed the only place where a strategy has a nontrivial choice in our construction.

\begin{figure}[h]
    \centering
    \begin{tikzpicture}[thick]

    \node[state] (l) at (0,0) {$\ell$};
    \node[state] (1) at (3,0) {$m$};
    \node[state] (lp) at (6,0) {$\ell+1$};
    
    \path
    (l) edge[loop below] node[below] {$\reload{\one}$} ()
    (l) edge node[above] {$\zero$} node[below] {} (1)
    (1) edge[loop above] node[above] {$\reload{\zero}$} ()
    (1) edge[loop below] node[below] {$\reload{\one}$} ()
    (1) edge node[above] {$\reload{\termsymb}$} node[below] {} (lp)
    (l.north) edge[bend left=50] node[above] {$\reload{\termsymb}$} (lp.north);

\end{tikzpicture}
\caption{The gadget for the instruction~$\ell: \dec{0}$.}
\label{fig_dec}
\end{figure}
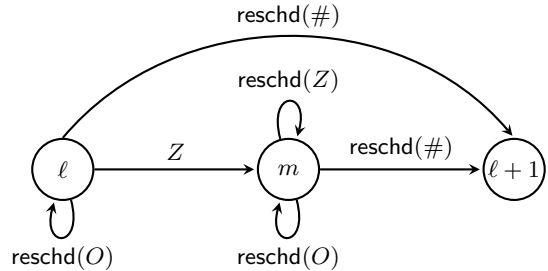

The edges of the decrement-gadget (shown in Figure~\ref{fig_dec}) induce the following (unique) run from a state~$(\ell, v, \sigma)$ with a well-timed \emph{and} well-sorted $\sigma$.
If $\sigma$ contains no $\zero$, i.e., the counter value for $\texttt{X}_0$ encoded by $\sigma$ is zero, then all $\one$'s in $\sigma$ are rescheduled (using the self-loop on $\ell$) and then the $\termsymb$ is rescheduled (on the edge from $\ell$ to $\ell+1$).
The resulting schedule is well-timed (as the reschedules preserve well-timedness), well-sorted, and has the same number of $\one's$ as $\sigma$ (and still no $\zero$). 
Thus, the decrement of counter~$\texttt{X}_0$ (holding value zero) is correctly simulated.

On the other hand, if $\sigma$ contains a $\zero$, i.e., the counter value for $\texttt{X}_0$ encoded by $\sigma$ is non-zero, then all $\one$'s in $\sigma$ appearing before the first~$\zero$ are rescheduled (using the self-loop on $\ell$), the first $\zero$ is removed from the schedule (using the edge from $\ell$ to $m$), thereby ensuring that there is indeed a $\zero$ in $\sigma$, the remaining $\zero$'s and $\one$'s are rescheduled (using the self-loop on $m$), and finally the $\termsymb$ is rescheduled (using the edge from $m$ to $\ell+1$). 
As each reschedule and the processing of the first~$\zero$ preserves well-timedness, the resulting schedule (when reaching $\ell+1$) is well-timed, and it is well-sorted. 
Furthermore, it has one less $\zero$ than $\sigma$ and the same number of $\one$'s as $\sigma$.
Thus, the decrement of counter~$\texttt{X}_0$ (holding a non-zero value) is correctly simulated.

\begin{figure}[h]
    \centering
    \begin{tikzpicture}[thick]

    \node[state] (l) at (0,0) {$\ell$};
    \node[state] (1) at (3,-1) {$m$};
    \node[state] (lp) at (6,1) {$\ell'$};
    \node[state] (lpp) at (6,-1) {$\ell''$};
    
    \path
    (l) edge[loop below] node[below] {$\reload{\one}$} ()
    (l) edge[bend right=13] node[above,xshift=13] {$\reload{\zero}$} node[below] {} (1)
    (1) edge[loop above] node[above] {$\reload{\zero}$} ()
    (1) edge[loop below] node[below] {$\reload{\one}$} ()
    (1) edge node[above] {$\reload{\termsymb}$} node[below] {} (lpp)
    (l) edge[bend left=13] node[above] {$\reload{\termsymb}$} (lp);

\end{tikzpicture}
\caption{The gadget for the instruction~$\ell: \ite{0}{\ell'}{\ell''}$.}
\label{fig_goto}
\end{figure}

The edges of the jump-gadget (shown in Figure~\ref{fig_goto}) induce the following (unique) run from a state~$(\ell, v, \sigma)$ with a well-timed \emph{and} well-sorted $\sigma$.
If $\sigma$ contains no $\zero$, i.e., the counter value for $\texttt{X}_0$ encoded by $\sigma$ is zero, then all $\one$'s in $\sigma$ are rescheduled (using the self-loop on $\ell$) and then the $\termsymb$ is rescheduled (on the edge from $\ell$ to $\ell'$). The resulting schedule is well-timed (as the reschedules preserve well-timedness), well-sorted, and has the same number of $\one's$ as $\sigma$ (and still no $\zero$) and thus the jump testing counter~$\texttt{X}_0$ (holding value zero) is correctly simulated.

On the other hand, if $\sigma$ contains a $\zero$, i.e., the counter value for $\texttt{X}_0$ encoded by $\sigma$ is non-zero, then all $\one$'s in $\sigma$ appearing before the first~$\zero$ are rescheduled (using the self-loop on $\ell$), the first $\zero$ is rescheduled (using the edge from $\ell$ to $m$), thereby ensuring that there is indeed a $\zero$ in $\sigma$, the remaining $\zero$'s and $\one$'s are rescheduled (using the self-loop on $m$), and finally the $\termsymb$ is rescheduled (using the edge from $m$ to $\ell''$). 
As each reschedule preserves well-timedness, the resulting schedule (when reaching $\ell''$) is well-timed, and it is well-sorted.
Thus, the jump testing counter~$\texttt{X}_0$ (holding a non-zero value) is correctly simulated.

    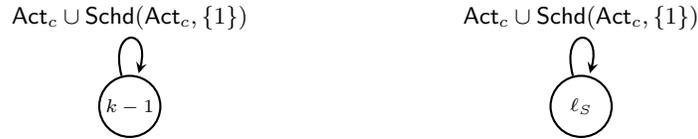
\begin{figure}[h]
    \centering
    \begin{tikzpicture}[thick]

    \node[state,inner sep=0] (stop) at (0,0) {\scriptsize $k-1$};
    
    \path
    (stop) edge[loop above] node[above] {$\act_c \cup \Push(\act_c,\set{1})$} ();

    \node[state] (sink) at (6,0) {\scriptsize $\ell_S$};
    
    \path
    (sink) edge[loop above] node[above] {$\act_c \cup \Push(\act_c,\set{1})$} ();

\end{tikzpicture}
    \caption{The gadget for the instruction~$k-1: \stopp$ (left) and the sink state (right).}
    \label{fig_sinks}
\end{figure}

Finally, the location~$k-1$, corresponding to the line of the stop-instruction, is a sink equipped with a self-loop, as is the sink location (both shown in Figure~\ref{fig_sinks}). Thus, there is always an infinite run starting in these locations, independently of the clock valuation and the schedule.

Now, the timed game under delayed control~$\delaygame_\mach$ consists of the gadgets described above for each line of $\mach$ as well as the initialization gadget and the sink gadget. Applying the arguments laid out above inductively shows that a strategy in $\delaygame_\mach$ simulates the run of $\mach$ and that there is a strategy satisfying reachability (of location~$k-1$) if and only if $\mach$ terminates.
\qed
\end{proof}

Note that the automaton underlying $\delaygame_\mach$ only uses a single clock (which is only used locally to avoid scheduling two actions at the same time), has no uncontrollable actions, and a single delay in $T$. 
Furthermore, the only decision a strategy has to make is to determine how much time passes in location~$m_1$ in the simulation of an increment.

Furthermore, the reduction can easily be adapted to show that the existence of a strategy satisfying safety is also undecidable. 
To this end, one just needs to put all locations but $\ell_S$ and $k-1$ into $F$ (thereby capturing nontermination of $\mach$, which is also undecidable).
Finally, the existence of a strategy satisfying both safety and time-divergence is also undecidable, as the additional requirement of time-divergence can be achieved by the strategy: if the infinite run is simulated, then time diverges as simulating each instruction takes one unit of time. 

\section{Decidability for Bounded Schedules}
\label{sec_bounded}

As timed games under delayed control with unbounded schedules are undecidable, we are now studying timed games under delayed control with bounded schedules.
Before we do so, we need to adapt the semantics of timed games under delayed control to disable scheduling actions when the schedule is already full. 
Say we bound the size of the schedule by $\capa$. Then, a new action can only be scheduled when the schedule contains less than $\capa$ scheduled actions. Formally, we redefine the scheduling of (control) actions in the definition of $\tsys{\delaygame}$ for a timed game under delayed control~$\delaygame$ as follows (while leaving the three other cases of the definition unchanged): 
\begin{itemize}
\item For $a \in \act_c$ and $t \in T$, there is a transition
    \vspace{-1em}
    \begin{multline*}
    {(\ell, v, (a_0, t_0)\cdots (a_{k-1}, t_{k-1}))}\xrightarrow{\push{a,t}}\\{(\ell', v', (a_0, t_0)\cdots (a_{i-1}, t_{i-1})  (a,t) (a_{i}, t_{i}) \cdots  (a_{k-1}, t_{k-1})}
       \end{multline*}
    if $k < \capa$, $t_{i-1} \le t \le t_i$, and there is an edge~$(\ell, g, \push{a,t}, \clocks',\ell')\in E$ such that $v \models g$, $v' = \reset{v}{\clocks'}$, and $v' \models \inv(\ell')$.
\end{itemize}
The definitions of strategies, their outcomes, and them satisfying winning conditions are also unchanged.

Let $\delaygame = (\aut, \act_c, \act_u, T,F)$ be a timed game under delayed control with $\aut = (L, \ell_0, \act, \clocks, E, \inv)$,  where $\act = \act_c \cup \act_u \cup \Push(\act_c, T)$. Further, let $\capa \in \nats$ be the bound on the size of schedules. 
We define $C = \set{0,1,\ldots, \capa-1}$. 

Intuitively, we will reduce timed games under delayed control with bounded schedules to timed games by storing the schedule using the locations \emph{and} additional clocks. As the size of schedules is bounded by $\capa$, this is possible with $\capa$ many fresh clocks and an exponential blowup of the number of locations.
Intuitively, a schedule~$\sigma=(a_0, t_0)\cdots (a_{k-1}, t_{k-1})$ with $k \le \capa$ is encoded by three partial functions and a collection~$\set{x_i \mid i \in C}$ of fresh clocks as follows:
\begin{itemize}
    \item The function~$\clockf$ injectively maps indices~$j \in \set{0,1,\ldots, k-1}$ to a clock in $\set{x_i \mid i \in C}$ keeping track of the time since the $j$-th entry of the schedule has been scheduled. Note that this is different from the value~$t_j$, which keeps track of the time that passes before $a_j$ has to be executed.
    
    \item The function~$\actionf$ maps indices~$j \in \set{0,1,\ldots, k-1}$ to the control action~$a_j$ of the schedule~$\sigma$.

    \item The function~$\delayf$ maps indices~$j \in \set{0,1,\ldots, k-1}$ to elements in $T$, intended to store the delay the action~$a_j$ was originally scheduled with. Then, one can obtain~$t_j$ as the difference between $\delayf(j)$ and $v(\clockf(j))$, provided $\clockf(j)$ was last reset when $a_j$ was scheduled.
\end{itemize}
All three functions are undefined for $j > k-1$, i.e., they have the same domain~$\set{0,1,\ldots, k-1}$.

We now formalize this intuition.
To this end, we write~$\dom{f}$ for the domain of a partial function~$f$ and $\ran{f}$ for the range~$\set{f(j) \mid j \in \dom{f}}$ of $f$.
We say that three partial functions~$\clockf \colon C \rightarrow \set{x_i \mid i \in C}$, $\actionf \colon C \rightarrow \act_c$, and $\delayf\colon C\rightarrow T$ are compatible if $\dom{\clockf} = \dom{\actionf} = \dom{\delayf} = \set{0,1, \ldots, k-1}$ for some $k \le \capa$ (note that the domains are empty for $k = 0$). Let 
\[S = \set{(\clockf,\actionf,\delayf) \mid \clockf,\actionf,\delayf \text{ are compatible and $\clockf$ is injective}},\] 
which is finite.

We define the (delay-free) timed game~$\game = (\aut^f, \act_u^f, \act_c^f, F^f)$ with $\aut^f = (L^f, \ell_0^f, \act^f, \clocks^f, E^f,\inv^f)$ where
\begin{itemize}
    \item $L^f = L \times S$,
    \item $\ell_0^f = (\ell_0, \emptyset, \emptyset, \emptyset)$, where $\emptyset$ is the function with empty domain,
    \item $\act^f = \act$ with $\act_u^f = \act_u \cup \act_c$ and $\act_c^f = \Push(\act_c, T)$ (actions in $\act_c$ are uncontrollable here, as only the scheduling of actions is controllable in a timed game under delayed control),
    \item $\clocks^f = \clocks \cup \set{x_i \mid i \in C}$,
    \item $\inv^f(\ell,\clockf,\actionf,\delayf) = \inv(\ell) \wedge \bigwedge_{i \in \dom{\clockf}}\clockf(i) \le \delayf(i)$ (the additional invariants ensure that time can pass in a location only until the first scheduled action is due to be executed),
    \item $F^f = F \times S$,
    \item and $E^f$ is defined as follows:
    \begin{itemize}
        
        \item Let $(\ell, g, a, \clocks', \ell') \in E$ with $a \in \act_u$ and let $(\clockf, \actionf, \delayf) \in S$. 
        Then, we have $((\ell,\clockf, \actionf, \delayf), g, a, \clocks', (\ell',\clockf, \actionf, \delayf)) \in E^f$:
        Uncontrollable actions can be executed as in $\delaygame$, and they do not influence the encoded schedule.
       
        \item Let $(\ell, g, a, \clocks', \ell') \in E$ with $a \in \act_c$ and let $(\clockf, \actionf, \delayf) \in S$ with $\actionf(0) = a$. 
        Then, we have $((\ell,\clockf, \actionf, \delayf), g \wedge g', a, \clocks', (\ell',\clockf', \actionf', \delayf')) \in E^f$ where 
             $g'$ is the constraint~${\clockf(0)} = \delayf(0)$, and
             $\clockf'(i) = \clockf(i+1)$, 
             $\actionf'(i) = \actionf(i+1)$, and
             $\delayf'(i) = \delayf(i+1)$ for all $i \in \dom{\clockf} \setminus\set{\max\dom{\clockf}}$ (and $\clockf', \actionf', \delayf'$ are undefined for all other inputs):
             The first action scheduled in the encoded schedule can be executed, provided the right amount of time has passed (captured by the guard~$g'$).
             This removes the first item from the encoded schedule, which requires to move every other scheduled action \myquot{one place to the left}.
        
        \item Let $(\ell, g, \push{a,t}, \clocks', \ell') \in E$, $(\clockf, \actionf, \delayf) \in S$ such that $\dom{\clockf} = \set{0,1,\ldots, k-1}$ with $k < \capa$, and $i^* \in \set{0,1,\ldots, k}$. Due to $k < \capa$, $x^* = x_{\min \set{C \setminus \ran{\clockf}}}$ is well-defined.
        We have $((\ell,\clockf, \actionf, \delayf), g \wedge g', a, \clocks' \cup \clocks'', (\ell',\clockf', \actionf', \delayf')) \in E^f$ where, for $i^*>0$, 
        $g'$ is the guard
        \[({\clockf(i^*-1)} \ge t - \delayf(i^*-1)) \wedge ({\clockf(i^*)} \le t - \delayf(i^*)),\]
        which is equivalent to \[\delayf(i^*-1) - {\clockf(i^*-1)} \le t \le \delayf(i^*) - {\clockf(i^*)}.\] 
        For $i^*=0$, 
        $g'$ is the guard~$({\clockf(i^*)} \le t - \delayf(i^*))$,
        which is equivalent to $t \le \delayf(i^*) - {\clockf(i^*)}$. 
        Furthermore,
        $\clocks'' = \set{x_{i^*}}$ and
        \[
    \clockf'(i)= \begin{cases} 
    \clockf(i) &\text{ if } i < i^*,\\
    x^*        &\text{ if } i = i^*,\\
    \clockf(i-1) &\text{ if } i > i^*,
    \end{cases}
    \]
    \[
    \actionf'(i)= \begin{cases} 
    \actionf(i) &\text{ if } i < i^*,\\
    a        &\text{ if } i = i^*,\\
    \actionf'(i-1) &\text{ if } i > i^*,
    \end{cases}
    \] 
    and \[
    \delayf'(i)= \begin{cases} 
    \delayf(i) &\text{ if } i < i^*,\\
    t        &\text{ if } i = i^*,\\
    \delayf(i-1) &\text{ if } i > i^*.
    \end{cases}
        \]
    Here, a new action is scheduled. This action is then inserted into the encoded schedule at any valid index~$i^*$, which requires to move some scheduled actions \myquot{one place to the right}. Note that the guard~$g'$ ensures that the action is inserted at the \myquot{right} indices only. Furthermore, the clock~$x^*$ assigned to this action is reset, so that it keeps track of the time passed since the action was scheduled.    
    \end{itemize}

\end{itemize}

\begin{lemma}
\label{lemma_correctness_tgdc2tg}
There is a strategy satisfying reachability in the timed game under delayed control~$\delaygame$ with schedules bounded by $\capa$ if and only if there is a strategy satisfying reachability in the resulting timed game~$\game$.
\end{lemma}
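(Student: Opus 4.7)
The plan is to establish a tight correspondence between the labeled transition systems~$\tsys{\delaygame}$ and~$\tsys{\game}$ via a bisimulation-like relation on their states, and then to transfer strategies along it. I would relate a state~$(\ell, v, (a_0, t_0)\cdots(a_{k-1}, t_{k-1}))$ of $\tsys{\delaygame}$ to a state~$((\ell, \clockf, \actionf, \delayf), v^+)$ of $\tsys{\game}$ whenever $v^+$ restricted to $\clocks$ equals $v$, $\dom{\clockf} = \dom{\actionf} = \dom{\delayf} = \set{0, \ldots, k-1}$, $\actionf(j) = a_j$, and $\delayf(j) - v^+(\clockf(j)) = t_j$ for every $j < k$. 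This captures the intended encoding: $\delayf(j)$ stores the delay with which the $j$-th pending action was originally scheduled, and $\clockf(j)$ tracks the time elapsed since that scheduling, so that the remaining delay~$t_j$ is recovered as the difference. The initial states are related, and the induced projection from $\tsys{\game}$-states to $\tsys{\delaygame}$-states is total (the converse is a one-to-many relation, since the value of $v^+$ on unused fresh clocks and the concrete clock assignment are immaterial).

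Next, I would verify case-by-case that the relation is preserved by all four transition types with matching labels, in both directions. Time transitions of duration~$\delta$ correspond because the added invariant conjunct~$\bigwedge_{i \in \dom{\clockf}} \clockf(i) \le \delayf(i)$ in $\aut^f$ restates precisely the side condition~$\delta \le t_0$ of $\tsys{\delaygame}$. Uncontrolled edges are lifted transparently since they do not touch the schedule. A scheduled control action~$a$ is enabled in $\delaygame$ exactly when $t_0 = 0$, which matches the guard~$\clockf(0) = \delayf(0)$ on the corresponding edge of $\game$; the tail-shift on schedules mirrors the reindexing of $\clockf', \actionf', \delayf'$. The subtlest case is $\push{a,t}$: the guard~$g'$ of the $\game$-edges parameterised by $i^*$ is equivalent to $t_{i^*-1} \le t \le t_{i^*}$, so the valid choices of $i^*$ stand in bijection with the valid insertion positions of $(a,t)$ into the schedule. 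Both transition systems include all such transitions, and a strategy only selects the action label (not among multiple enabled edges sharing it), so the nondeterministic insertion choice of $\delaygame$ is faithfully mirrored. Finally, the capacity bound~$k < \capa$ in $\delaygame$ is matched by the availability of a fresh clock~$x^* = x_{\min(C \setminus \ran{\clockf})}$ in $\game$, which exists precisely when the current schedule is not full.

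With the bisimulation in place, transferring strategies is routine. Given a winning strategy~$f$ for $\delaygame$, I define $f^*(\rho) = f(\overline{\rho})$, where $\overline{\rho}$ is the projection of the finite $\tsys{\game}$-run~$\rho$ to a $\tsys{\delaygame}$-run; this is well-defined because the projection on states is a function. Induction on run length shows that outcomes of $f^*$ project onto outcomes of $f$, and since $F^f = F \times S$, the reachability guarantee transfers. The converse direction is symmetric, lifting each run by choosing any consistent clock assignment for newly scheduled actions. I expect the main obstacle to be the book-keeping in the scheduling case, where one must simultaneously track the reset of $x^*$, the shifting of indices beyond $i^*$, and the correspondence of the nondeterministic insertion position with the parameterised family of $\game$-edges; while conceptually straightforward, the index arithmetic and the injectivity of $\clockf$ throughout the run are the most error-prone points.
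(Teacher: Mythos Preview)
Your proposal is correct and follows essentially the same approach as the paper: the paper's own argument is only a brief proof sketch asserting a bijection between runs of $\delaygame$ and $\game$ and the resulting translation of strategies, and your bisimulation-on-states development is precisely the detailed unfolding of that sketch. If anything, you are more careful than the paper in noting that the state-level correspondence is one-to-many (unused fresh clocks, choice of $\clockf$), though from the initial states the run-level correspondence is indeed bijective since the construction fixes $x^* = x_{\min(C \setminus \ran{\clockf})}$ deterministically.
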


\begin{proofsketch}{}
An induction shows that there is a bijection between (finite and infinite, respectively) runs of $\delaygame$ with schedules bounded by $\capa$ and (finite and infinite, respectively) runs of $\game$. 
This bijection allows one to translate a strategy for $\delaygame$ into a strategy for $\game$ and vice versa.
Note that these translations preserve the satisfaction of reachability. 
\qed
\end{proofsketch}

Now, we are able to state our main result of this section.

\begin{theorem}
The following problem is in \twoexptime: Given a timed game under delayed control and a bound on the size of schedules (encoded in unary), is there a strategy that satisfies reachability?
\end{theorem}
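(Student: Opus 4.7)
The plan is to invoke the reduction described immediately above the theorem, together with Lemma~\ref{lemma_correctness_tgdc2tg} and the \exptime{} bound for timed games stated in Proposition~\ref{prop:timedgames}. Given a timed game under delayed control~$\delaygame$ with bound~$\capa$ on the schedule size, I first construct the timed game~$\game$. Lemma~\ref{lemma_correctness_tgdc2tg} then guarantees that $\delaygame$ admits a strategy satisfying reachability if and only if $\game$ does. So it suffices to bound the size of $\game$ and then apply Proposition~\ref{prop:timedgames} to~$\game$.

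The main work in the proof is the size analysis of $\game$. The location set is $L^f = L \times S$, where $S$ consists of compatible triples $(\clockf,\actionf,\delayf)$ with $\clockf$ injective. For each $k \le \capa$ there are at most $\frac{\capa!}{(\capa - k)!}$ choices for $\clockf$, at most $\size{\act_c}^k$ choices for $\actionf$, and at most $\size{T}^k$ choices for $\delayf$, so $\size{S}$ is bounded by $(\capa + 1) \cdot \capa! \cdot (\size{\act_c} \cdot \size{T})^{\capa}$, which is singly exponential in the input size since $\capa$ is encoded in unary (and $\size{T}$ is already part of the input). Hence $\size{L^f}$ is exponential in the size of~$\delaygame$. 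The clock set $\clocks^f = \clocks \cup \set{x_i \mid i \in C}$ has size~$\size{\clocks} + \capa$, which is polynomial because of the unary encoding of~$\capa$. The largest constant appearing in guards and invariants of~$\game$ is bounded by the maximum of the largest constant of~$\aut$ and $\max T$, so it is at most the size of the input.

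Thus $\game$ has size at most exponential in the size of the input. Proposition~\ref{prop:timedgames} then yields an algorithm running in time exponential in the size of~$\game$, i.e., doubly exponential in the size of the input, establishing the \twoexptime{} upper bound.

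The main obstacle, already addressed by the reduction itself, is to faithfully represent the schedule, which carries both combinatorial information (the sequence of actions) and real-valued information (the remaining delays), inside the discrete and clock components of a classical timed automaton; the key idea is that, because $\capa$ is bounded, one can allocate a fresh clock per pending action and encode the action and its original delay in the location. The complexity analysis itself is then a routine counting argument, with the unary encoding of~$\capa$ being essential so that the number of clocks stays polynomial and hence $\size{S}$ stays singly (rather than doubly) exponential.
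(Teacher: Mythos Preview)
Your proposal is correct and follows essentially the same approach as the paper: construct the timed game~$\game$ from the reduction, bound $\size{S}$ (and hence $\size{L^f}$) singly exponentially using the unary encoding of~$\capa$, and apply Proposition~\ref{prop:timedgames} together with Lemma~\ref{lemma_correctness_tgdc2tg}. If anything, your size analysis is slightly more careful than the paper's, since you also explicitly account for the number of clocks and the maximal constant in~$\game$.
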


\begin{proof}
Given a timed game under delayed control~$\delaygame$ and a bound~$\capa$, the timed game~$\game$ has $\capa$ additional clocks and the number of locations blows up by a factor of
\begin{align*}
\size{S} \le {}&{} (\capa+1)^\capa \cdot (\size{\act_c}+1)^\capa \cdot (\size{T}+1)^\capa\\ 
={}&{} 2^{\capa (\log (\capa+1) + \log (\size{\act_c}+1) + \log (\size{T}+1))},    
\end{align*}
which is exponential in the number of locations of $\aut$ and in $\capa$ (note that this relies on $\capa$ being encoded in unary).

Thus, one can construct~$\game$ in exponential time and then determine in exponential time (in the size of $\game$) whether there is a strategy for $\game$ satisfying reachability (see Proposition~\ref{prop:timedgames}). 
This solves our original problem in doubly-exponential time, due to Lemma~\ref{lemma_correctness_tgdc2tg}.
\qed
\end{proof}

Again similar results can be obtained for other winning conditions, e.g., safety (with and without time-divergence), parity, etc., as the set of locations of $\game$ is the product of the set of locations of $\delaygame$ and the set~$S$ and Lemma~\ref{lemma_correctness_tgdc2tg} holds for all of these conditions. 

\begin{remark}
There are sound, algorithmic methods that, for a given timed game under delayed control~$\delaygame$, may produce a bound~$B$ (possibly $\infty$), such that the existence of a  strategy for $\delaygame$ satisfying reachability with schedules bounded by $B$ is equivalent to the existence of a strategy satisfying reachability with unbounded schedules (similar claims hold for other winning conditions, e.g., safety).   
Clearly, due to the undecidability of the latter problem, the methods for producing such bounds~$B$ can be sound, but can never be complete, meaning they cannot always compute a (finite) bound if there is one.  

One such method is obtained by examining all (syntactic) simple cycles of $\delaygame$: Assume that the time for completing each such cycle is at least $T$, e.g., if for each such cycle there is a clock~$x$ which is reset on the cycle and where $x\geq T$ appears in a guard of the cycle. Also, assume that on each simple cycle at most $S$ actions are scheduled. Finally, assume that $D$ is the maximum delay that appears in any scheduled action. Then, we can conclude that $B=(\lceil \frac{D}{T}\rceil +1)\cdot S$ is a valid bound~$B$.
\end{remark}

\begin{example}
Reconsider the timed game under delayed control from Figure~\ref{fig:prdsys} constituted by the product of $\Psy$ and $\Sch$.  
The only simple cycles involve one of the looping transitions scheduling the $\kick$ action with a delay of either $13$ or $22$.  In both cases the minimum time for completing the cycle is $8$. Thus, $(\lceil \frac{22}{8}\rceil +1)\cdot 1=4$ is an upper bound on the number of controllable actions that may simultaneously be scheduled.
\end{example}

\section{Timed Games are Timed Games under Delayed Control}
\label{sec_cons}

In this section, we show that our generalization of timed games to timed games under delayed control is conservative in the sense that each timed game (without delay) can be turned into an equivalent timed game under delayed control.

Let $\game = (\aut, \act_c, \act_u, F)$ be a timed game with $\aut = (L, \ell_0, \act, \clocks, E, \inv)$, i.e., $\act = \act_c \cup \act_u$ and $F \subseteq L$.
We define the timed game under delayed control~$\delaygame = (\aut^d, \act_c, \act_u, T^d, F^d)$ with $\aut^d = (L^d, \ell_0^d, \act^d, \clocks, E^d, \inv^d)$ where
\begin{itemize}
    \item $L^d = L \times \set{0,1}$,
    \item $\ell_0^d = (\ell_0,0)$,
    \item $\act^d = \act_c \cup \act_u \cup \Push(\act_c, T^d)$,
    \item $\inv^d(\ell, 0) = \inv(\ell)$ and $\inv^d(\ell,1) = \mathtt{true}$,
    \item $T^d = \set{0}$,
    \item $F^d = F \times \set{0,1}$, and
    \item $E^d$ contains the following edges:
    \begin{itemize}
        \item Let $(\ell, g, a, \clocks',\ell') \in E$ with $a \in \act_u$. Then, $((\ell,0), g, a, \clocks',(\ell',0)) \in E^d$. Intuitively, an uncontrolled edge is simulated using locations of the form~$L\times\set{0}$.
        
        \item Let $(\ell, g, a, \clocks',\ell') \in E$ with $a \in \act_c$. In this case, we have the two edges~$((\ell,0), g, \push{a,0}, \clocks',(\ell,1))$ and $((\ell,1), \mathtt{true}, a, \emptyset,(\ell',0))$ in $E^d$. Intuitively, a controlled edge is simulated by checking the guard and applying the reset, but instead of moving to the new location~$\ell'$, the control action is scheduled (with delay~$0$) and the auxiliary location~$(\ell,1)$ is reached. From here, the only edges available are those that execute the scheduled action, leading to the location~$(\ell',0)$. As no time may pass in $(\ell,1)$, as the action was scheduled with delay~$0$, the guard does not have to be checked again and the clocks in $\clocks'$ do not have to be reset again.
    \end{itemize}
\end{itemize}

Formally, this automaton does \emph{not} satisfy our deadlock-freedom requirement for timed games under delayed control, but adding a sink as described in Remark~\ref{remark_deadlock} fixes that issue. Note that the new edges ensuring deadlock-freedom will never be enabled: The size of schedules in $\delaygame$ is always bounded by $1$, as locations of the form~$(\ell,1)$, which are reached when scheduling an action, do not allow to schedule further actions and can only be left when the scheduled action is executed. 

\begin{theorem}
\label{thm_correctness_tg2tgdc}
Every timed game~$\game$ can be transformed (in linear time) into a timed game under delayed control~$\delaygame$ that is equivalent in the following sense: There is a strategy satisfying reachability in $\game$ if and only if there is a strategy satisfying reachability in $\delaygame$.
\end{theorem}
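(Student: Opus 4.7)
The plan is to prove the theorem by establishing a natural bijection $\Phi$ between the runs of $\tsys{\aut}$ and those of $\tsys{\aut^d}$, and then translating strategies and outcomes along this bijection.

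First, I would define $\Phi$ inductively as follows. A time transition $\labtrans{(\ell, v)}{\delta}{(\ell, v+\delta)}$ is mapped to $\labtrans{((\ell, 0), v, \varepsilon)}{\delta}{((\ell, 0), v+\delta, \varepsilon)}$. An uncontrollable discrete transition $\labtrans{(\ell, v)}{a}{(\ell', v')}$ with $a \in \act_u$ is mapped to $\labtrans{((\ell, 0), v, \varepsilon)}{a}{((\ell', 0), v', \varepsilon)}$. A controllable discrete transition $\labtrans{(\ell, v)}{a}{(\ell', v')}$, coming from an edge $(\ell, g, a, \clocks', \ell') \in E$, is mapped to the two-transition block $\labtrans{((\ell, 0), v, \varepsilon)}{\push{a,0}}{((\ell, 1), v', (a, 0))}$ followed by $\labtrans{((\ell, 1), v', (a, 0))}{a}{((\ell', 0), v', \varepsilon)}$, where the scheduling edge and the execution edge form the pair induced by the original edge. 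The key observations justifying bijectivity of $\Phi$ are that (i) in any state with location $(\ell, 1)$ and schedule $(a, 0)$, the only possible next transitions are executions of $a$: time transitions are blocked by the constraint $\delta \le t_0 = 0$, and by construction $E^d$ contains no uncontrollable and no scheduling edges out of $(\ell, 1)$-locations; and (ii) the guard and reset of the original controllable edge are applied at the scheduling step exactly as in $\game$.

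Next, I would lift $\Phi$ to a correspondence between strategies. Given a strategy $f$ for $\game$, I define $f^d$ for $\delaygame$ by $f^d(\rho^d) = \push{f(\Phi^{-1}(\rho^d)), 0}$ whenever $\rho^d$ ends in a state of the form $((\ell, 0), v, \varepsilon)$ and $f(\Phi^{-1}(\rho^d)) \in \act_c$; $f^d(\rho^d) = \lambda$ whenever $\rho^d$ ends in such a state with $f(\Phi^{-1}(\rho^d)) = \lambda$; and $f^d(\rho^d) = \lambda$ whenever $\rho^d$ ends in a state with location of the form $(\ell, 1)$, since the pending zero-delay action must be executed next without any controller intervention. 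The converse translation from a strategy $g$ for $\delaygame$ to a strategy for $\game$ is defined analogously, reading off the action from the corresponding $\push{\cdot,0}$ choice. By induction on the length of prefixes, $\Phi$ restricts to a bijection between $\outcomes{s_0}{f}$ and the outcomes of $f^d$ from the initial state of $\delaygame$, and similarly in the other direction.

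Finally, since $F^d = F \times \set{0, 1}$ and $\Phi$ preserves the first coordinate of every location visited along a run, a run of $\game$ visits $F$ iff its $\Phi$-image visits $F^d$. Consequently, $f$ satisfies reachability iff $f^d$ does, which yields both implications of the theorem. The main subtlety that most deserves care is verifying the transient nature of the auxiliary $(\ell, 1)$-locations: I need to confirm that no time elapses there, no uncontrollable action intervenes, and no new scheduling can happen while a pending zero-delay action sits at the head of the queue, so that the schedule-then-execute block in $\delaygame$ is genuinely atomic and faithfully simulates the single controllable transition of $\game$. Once this atomicity is established, the rest of the argument is essentially mechanical.
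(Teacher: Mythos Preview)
Your proposal is correct and follows essentially the same approach as the paper's own proof sketch: establish a bijection between runs of $\game$ and runs of $\delaygame$, translate strategies along this bijection, and observe that the translation preserves reachability. You have simply fleshed out the details the paper leaves implicit, and your identification of the atomicity of the schedule-then-execute block at $(\ell,1)$-locations as the key subtlety is exactly right.
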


\begin{proofsketch}{}
An induction shows that there is a bijection between (finite and infinite, respectively) runs of $\game$ and (finite and infinite, respectively) runs of $\delaygame$. 
This bijection allows one to translate a strategy for $\game$ into a strategy for $\delaygame$ and vice versa.
Note that these translations preserve the satisfaction of reachability. 
\qed
\end{proofsketch}

The following corollary follows now immediately from the \exptime-hardness of determining the existence of strategies satisfying reachability in timed games (see Proposition~\ref{prop:timedgames}).

\begin{corollary}
The following problem is \exptime-hard: Given a timed game under delayed control whose schedules are of size at most $1$, is there a strategy that satisfies reachability?
\end{corollary}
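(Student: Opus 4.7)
The plan is to obtain the corollary as an immediate consequence of Theorem~\ref{thm_correctness_tg2tgdc} together with the \exptime-hardness half of Proposition~\ref{prop:timedgames}. Concretely, I would exhibit a polynomial-time many-one reduction from the reachability problem for (plain) timed games to the stated bounded-schedule problem.

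First, given an instance $\game$ of the timed games reachability problem, I would apply the linear-time construction from Theorem~\ref{thm_correctness_tg2tgdc} to obtain a timed game under delayed control $\delaygame$ that admits a reachability strategy if and only if $\game$ does. Second, I would verify that in every state of $\delaygame$ reachable from its initial state, the current schedule has length at most $1$: by inspection of $E^d$, the only scheduling edges leave locations of the form $(\ell,0)$ and arrive in auxiliary locations $(\ell,1)$, from which no further scheduling edge is available, and the only way to leave $(\ell,1)$ is the edge executing the single pending action, which empties the schedule before any new scheduling action becomes available again. Hence the bounded-schedule semantics with $\capa = 1$ coincides with the unrestricted semantics of $\delaygame$ on its reachable fragment, so the equivalence established in Theorem~\ref{thm_correctness_tg2tgdc} transfers to the bounded setting. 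Third, since deciding reachability in timed games is \exptime-hard by Proposition~\ref{prop:timedgames} and the reduction runs in linear time, the corresponding hardness is inherited by the target problem.

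I do not expect any genuine obstacle: both ingredients are already in place, and the only thing to make explicit is the schedule-length bound, which is a direct structural observation about the two families of edges introduced in the construction of $\delaygame$. The sink added for deadlock freedom (as in Remark~\ref{remark_deadlock}) does not affect the argument, since, as noted just before Theorem~\ref{thm_correctness_tg2tgdc}, its new edges are never enabled on reachable runs.
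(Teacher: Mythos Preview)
Your proposal is correct and matches the paper's approach exactly: the corollary is stated to follow immediately from the \exptime-hardness in Proposition~\ref{prop:timedgames} via the linear-time reduction of Theorem~\ref{thm_correctness_tg2tgdc}, and the schedule-size observation you spell out is precisely the remark the paper makes just before that theorem.
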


Again similar results can be obtained for other winning conditions, e.g., safety (with and without time-divergence), parity, etc., as the set of locations of $\delaygame$ is the product of the set of locations of $\game$ and the set~$\set{0,1}$ and Lemma~\ref{thm_correctness_tg2tgdc} holds for all of these conditions.

\section{Conclusion}

Following the persistent and influential work of Martin Fränzle capturing and handling delay inherent to cyber-physical systems in the formal verification of such systems, we introduced timed games where controllable actions do not take effect immediately, but only after some delay.
This generalizes (discrete) games under delayed control, as introduced by Martin Fränzle and his collaborators, to the setting of real-time.

In general, solving such games is undecidable, as the number of scheduled actions can be unbounded. 
On the other hand, we show that this is the only reason for undecidability: timed games under delayed control with bounded schedules can be solved in doubly-exponential time by a reduction to delay-free timed games. 
Finally, we show that timed games under delayed control are a conservative extension of timed games, i.e., every timed game can be turned into an equivalent timed game under delayed control.
As the blow-up of this construction is polynomial, the $\exptime$-hardness of solving timed games directly transfers to the problem of solving timed games under delayed control.
Note that this leaves a complexity gap: solving timed games under delayed control is $\exptime$-hard and in $\twoexptime$.

\bibliographystyle{splncs04}
\bibliography{bib}

\end{document}